%% file: main.tex
\documentclass{ceurart}

\usepackage[T1]{fontenc}
\usepackage[utf8]{inputenc}
\usepackage[normalem]{ulem}
\usepackage[]{outlines}
\usepackage[dvipsnames]{xcolor}
\usepackage{mathpartir}
\usepackage{stmaryrd}
\usepackage{iftex}
\usepackage{amsmath,amssymb,amsthm,comment}

\PassOptionsToPackage{hyphens}{url}
\usepackage{hyperref}
\hypersetup{
  unicode,
  bookmarksnumbered,
  colorlinks,
  pdfborder={0 0 0},
  pdfmenubar=false,
  pdfpagemode=UseNone,
  pdfnonfullscreenpagemode=UseNone,
  bookmarksopen=false,
  breaklinks=true,
  linkcolor={green!50!black},
  citecolor={red!50!black},
  urlcolor={blue!50!black}
}

\usepackage{xspace}
\usepackage{nicefrac}
\usepackage{tikz}
\usetikzlibrary{arrows.meta}
\tikzset{>=Latex[]}
\usepackage{booktabs}
\usepackage{caption}
\usepackage{subcaption}
\usepackage[inline]{enumitem}
\usepackage[nolist,nohyperlinks]{acronym}
\usepackage{mathtools}
\usepackage{forest}

\input{macros}

\begin{document}

\copyrightyear{2026}
\copyrightclause{Copyright for this paper by its authors. Use permitted under
	Creative Commons License Attribution 4.0 International (CC BY 4.0).}

\conference{%
	24th International Workshop on Satisfiability Modulo Theories,
	July 24--25, 2026, Lisbon, Portugal
}

\title{%
	A Modern View on MCSat
}

\author[1]{Thomas Hader}[%
email=thomas.hader@tuwien.ac.at,
orcid=0009-0002-8920-5469
]
\cormark[1]
\author[1]{Theo Jauschneg}[
email=theo.jauschneg@tuwien.ac.at,
]
\author[1]{Daniela Kaufmann}[%
email=daniela.kaufmann@tuwien.ac.at,
orcid=0000-0002-5645-0292,
]
\author[1]{Laura Kov\'acs}[%
email=laura.kovacs@tuwien.ac.at,
orcid=0000-0002-8299-2714,
]

\address[1]{TU Wien,
	Favoritenstraße 9-11, 1040 Wien,
	Austria}

\cortext[1]{Corresponding author.}

\begin{keywords}
  SMT \sep
  MCSat \sep
  automated reasoning
\end{keywords}

\maketitle

\input{abstract}

\acresetall

\section{Introduction}\label{sec:intro}
\input{intro}

\section{Preliminaries}\label{sec:prelim}
\input{prelim}

\section{\acs{mcsat} Search Schema}\label{sec:rules}
\input{rules}

\section{\acs{mcsat} Theories}\label{sec:plugins}
\input{plugins}

\section{Conclusion}\label{sec:conclusion}
\input{conclusion}

\input{acks}

\begin{flushleft}
\footnotesize
\setlength{\bibsep}{1pt}
\bibliography{refs}
\end{flushleft}

\appendix

\section{Proofs}\label{sec:proofs}
\input{proofs}

\section{Examples}\label{sec:example}
\input{example}

\end{document}

%% file: macros.tex
\newacro{mcsat}[MCSat]{Model Constructing Satisfiability}
\newacro{dpll}[DPLL]{Davis–Putnam–Logemann–Loveland}
\newacro{cdcl}[CDCL]{Conflict-Driven Clause Learning}
\newacro{cnf}[CNF]{conjunctive normal form}
\newacro{dnf}[DNF]{disjunctive normal form}
\newacro{sat}[SAT]{Boolean satisfiability}
\newacro{smt}[SMT]{satisfiability modulo theory}
\newacro{cdcl-t}[CDCL(T)]{\acl{cdcl} with theory support}
\newacro{vsids}[VSIDS]{Variable State Independent Decaying Sum}
\newacro{evsids}[EVSIDS]{exponential \ac{vsids}}
\newacro{vmtf}[VMTF]{variable-move-to-front}
\newacro{cad}[CAD]{cylindrical algebraic decomposition}

\newcommand{\mcsat}{\ac{mcsat}\xspace}
\newcommand{\yices}{Yices2\xspace}

\newcommand{\smtrat}{SMT-RAT\xspace}
\newcommand{\zthree}{Z3\xspace}
\newcommand{\nlsat}{NLSAT\xspace}

\newcommand{\smt}{\ac{smt}\xspace}
\newcommand{\sat}{\ac{sat}\xspace}
\newcommand{\smtlib}{SMT-LIB\xspace}

\newcommand{\true}{\ensuremath{\mathsf{true}\xspace}}
\newcommand{\false}{\ensuremath{\mathsf{false}\xspace}}
\newcommand{\udef}{\ensuremath{\mathsf{undef}\xspace}}
\newcommand{\subterms}{\mathsf{subterms}\xspace}
\newcommand{\eval}{\ensuremath{\mathsf{eval}}\xspace}
\newcommand{\sort}{\ensuremath{\mathsf{sort}}\xspace}
\newcommand{\func}{\ensuremath{\mathsf{func}}\xspace}

\newcommand{\terms}{\ensuremath{\mathsf{terms}}\xspace}

\newtheorem{theorem}{Theorem}
\newtheorem{lemma}{Lemma}
\newtheorem{example}{Example}

\usepackage{tcolorbox}
\tcbuselibrary{skins, breakable}

\makeatletter
\newenvironment{rulebox}[1]{%
	\begin{tcolorbox}[
		title        = {%
			\def\@currentlabelname{#1}%
			\phantomsection\label{rule:#1}%
			\textbf{#1}%
		},
		colback      = white,
		colframe     = black!75,
		colbacktitle = black!85,
		coltitle     = white,
		fonttitle    = \bfseries,
		arc          = 4pt,
		boxrule      = 0.6pt,
		titlerule    = 0pt,
		breakable,
		after        = {\noindent},
		]%
	}{%
	\end{tcolorbox}%
}
\makeatother

\newtheorem*{runningexample}{Running Example}

\newcommand{\teq}{\approx\xspace}
\newcommand{\tneq}{\mathrel{\not\approx}\xspace}
\newcommand{\lorS}{\,\lor\,}
\newcommand{\landS}{\,\land\,}
\newcommand{\limpS}{\,\rightarrow\,}
\newcommand{\bvor}{\ensuremath{\mathsf{or}\xspace}}

\newcommand{\ssat}{\ensuremath{\mathsf{sat}}\xspace}
\newcommand{\susat}{\ensuremath{\mathsf{unsat}}\xspace}
\newcommand{\initstate}{\ensuremath{\state{\emptytrail}}\xspace}

%% file: abstract.tex
\begin{abstract}
The Model Constructing Satisfiability (MCSat) approach has shown strong performance in solving complex SMT problems, in particular in algebraic SMT theories such as non-linear integer and real arithmetic. 
In this paper we revisit the theory-independent MCSat framework as a proof system to provide a modern perspective that refines the original formulation of MCSat.

By closely formalizing the implementation of MCSat within the Yices2 SMT solver, we incorporate design decisions that diverge from those in the seminal MCSat paper and thereby capture the current state-of-the-art in MCSat-based SMT reasoning.
	
We present a general, theory-agnostic rule scheme for MCSat and instantiate it for several theories, including propositional logic, non-linear real arithmetic, and uninterpreted functions. 
We provide several detailed examples to illustrate the applicability of the presented calculus.
\end{abstract}

%% file: intro.tex
\acused{smt}

The \ac{mcsat} calculus represents an alternative approach to DPLL(T) in \smt solving by integrating model construction and theory reasoning within a single, unified framework.
It was initially developed under the name \nlsat as an efficient approach to solve the satisfiability problem for the theory of quantifier-free non-linear real arithmetic (NRA)~\cite{nlsat}.
When \nlsat proved to be particularly effective, it was extended to encompass arbitrary theories and later published as \mcsat~\cite{mcsat,mcsat2}.
A theoretical view on the \mcsat principles with focus on theory combination has been published in~\cite{cdsat}.

Unlike the traditional DPLL(T) approach, \ac{mcsat} maintains a combined candidate model in all present theories throughout the solving process and extends it incrementally, allowing for more implicit (automatic) handling of complex theory combinations~\cite{mcsat}.
The \ac{mcsat}-calculus uses conflict analysis to guide the search, drawing inspiration from \ac{cdcl} in \ac{sat} solving and conflict explanation techniques from constraint programming~\cite{Korovin2009}.
It, thus, can be seen as a generalization of the prevalent solving techniques from \sat solving to first-order theories where different theories are combined in a general search framework based on a common partial model. As the model search is handled by a common, theory-independent framework, individual theories are merely required to provide conflict explanation in case the partial assignment is detected inconsistent in the respective theory.

To this day \mcsat remains to be the state-of-the-art approach in solving NRA and has since been implemented in many \smt solvers, such as \yices~\cite{DBLP:conf/cav/Dutertre14} and \smtrat~\cite{smtrat} (as a stand-alone solver) as well as \zthree~\cite{DBLP:conf/tacas/MouraB08} (as a theory back-end for NRA in its original \nlsat version).
However, over years of development, current implementations of \mcsat may sufficiently differ from the formal presentation in the original literature~\cite{mcsat}. This poses additional challenges for interested readers in understanding state-of-the-art \mcsat procedures.
In this work, we revisit the \mcsat theory by presenting an updated definition of \mcsat as a proof system that closely captures the implementation of the \mcsat reasoning engine of the \yices \smt solver.
While an experienced reader will certainly see close similarity to the original formalization, we aim to provide a compact and example-driven presentation to enable novel readers an easy and understandable introduction to \mcsat in general as well as its \yices implementation in particular.
Based on \smtlib~\cite{smtlib}, we first provide necessary definitions (Section~\ref{sec:prelim}) to later define the search schema (Section~\ref{sec:rules}), followed by some examples for theories (Section~\ref{sec:plugins}).

\acused{smt}
\acused{mcsat}

%% file: prelim.tex
\newcommand{\dom}{\mathsf{dom}\xspace}

Along the definitions of \smtlib (version 2.6~\cite{smtlib}), we define the \smt problem as the check for satisfiability in a many-sorted first-order logic with equality, limited to the quantifier free fragment, augmented with \emph{background theories}.
A background theory specifies (one or many) \emph{sorts}.
Each sort $\mathcal{S}$ contains a \emph{domain} $D_\mathcal{S}$ which is a non-empty, possibly infinite, set.
We define $\dom(\mathcal{S}) = D_\mathcal{S}$. By slight abuse of notation, we sometimes simply write $\mathcal{S}$ for $D_\mathcal{S}$.
Logical expressions in an \smt problem are defined using the language of \emph{well-sorted terms}, thus each term is associated with a unique sort.
Terms are constructed using \emph{functions}. A function consists of a \emph{function symbol} (i.e. its name) and a \emph{signature}, an ordered list of sorts of parameters and the sort it returns. We write $f: \mathcal{S}_1\times\dots\times\mathcal{S}_n \rightarrow \mathcal{S}$ to denote a function with symbol $f$ with a signature that maps sorts $\mathcal{S}_1$ up to $\mathcal{S}_n$ to sort $\mathcal{S}$. A term $t$ generated by $f$ is of sort $\mathcal{S}$, denoted by $\sort(t) = \mathcal{S}$.
Functions without parameters are \emph{constants}.
An \emph{interpretation} $I$ is a total function that maps a function to its semantics, i.e. its meaning.
To apply the interpretation $I$ on an arbitrary formula or term $a$ we write $a^I$.
For a constant, this is an element of $\mathcal{S}$, for all other functions, this is a total function (in the mathematical sense).
Functions can either be \emph{interpreted} or \emph{uninterpreted}. Interpreted functions have a fixed interpretation defined by the respective theory, e.g.\ addition in the theory of reals. Finding the semantics of an uninterpreted function is part of the search problem.
A term $t$ defined by function $f$ is written as $f(t_1,\dots,t_n)$ where $t_1,\dots,t_n$ are terms, called \emph{sub-terms} of $t$.
We denote $\subterms(t) = \{t_1, \dots, t_n\}$.
We define the notion of \emph{interpreted} and \emph{uninterpreted} term accordingly.
We further denote by $\terms(t)$ the transitive closure of the sub-term relation.
For better readability, we denote (dis-)equality inside of a term with $\teq$ and $\tneq$.

An \smt \emph{formula} $\varphi$ is a term of the distinguished \emph{Boolean} sort $\mathbb{B}$.
Naturally, we have $\dom(\mathbb{B}) = \{\true, \false\}$ and we can define all regular Boolean operators as interpreted functions accordingly. We write $\bot$ and $\top$ to denote the constant terms evaluated as $\false$ and $\true$, respectively.
A function $c$ that maps to $\mathbb{B}$ is denoted a \emph{constraint} and a term of $c$ a \emph{constraint term}.
The sort $\mathbb{B}$ is available in all \smt problems by definition.
The formula $\varphi$ is \emph{satisfiable} if there exists an interpretation $I$ such that $\varphi^I$ evaluates to $\true$ and we denote any such interpretation a \emph{model} of $\varphi$.

As usual in literature, we assume that any Boolean structure in $\varphi$ is expressed in \ac{cnf}.
We denote a constraint or its Boolean negation a \emph{literal}, a (disjunctive) set of literals a \emph{clause} and treat formulas as (conjunctive) sets of clauses. We present a clause $C$ as a tuple of literals and define $\terms(C) = \bigcup_{l\in C} \terms(l)$ and $\terms(\mathcal{C}) = \bigcup_{C\in \mathcal{C}} \terms(C)$ for a set of clauses $\mathcal{C}$.

\begin{example}
	Let \[\varphi = {1}+x<f({12.7})\] be an \smt formula, with sorts propositional logic ($\mathbb{B}$), reals ($\mathbb{R}$) and integers ($\mathbb{Z}$). The interpreted function symbols are $F_{i} = \{<, +, {1}, {12.7}\}$ where $<\,: \mathbb{Z} \times\mathbb{Z} \rightarrow \mathbb{B}$ is a constraint,
	$+: \mathbb{Z}\times\mathbb{Z} \rightarrow\mathbb{Z}$ and the constants ${1}$ and ${12.7}$ are of sort integer and real, respectively.%
	\footnote{Note that, by slight abuse of notation, we do not distinguish between the (constant) term $1$ and its interpretation, the domain element $1\in\dom(\mathbb{Z})$.}
	They are all interpreted in the standard way.
	The uninterpreted function symbols are $F_{u} = \{x, f\}$ where $x$ is a function symbol of arity $0$ of sort integer, and $f:\mathbb{R} \rightarrow \mathbb{Z}$.
	Note that we mix pre- and infix notation however appropriate.
\end{example}

Given an $n$-ary interpreted function $f$, we define the \emph{evaluation function} $\eval_f$ to allow evaluating terms without a full interpretation function given.
We thus extend each sort in the signature of $f$ with the new special domain element $\udef$. Given the signature of $f$ and corresponding domain elements $s_1\in\mathcal{S}_1\cup\{\udef\}, \dots, s_n\in\mathcal{S}_n\cup\{\udef\}$, then $\eval_f(s_1,\dots,s_n) \mapsto s$ for the correct $s\in\mathcal{S}\cup\{\udef\}$ according to the interpretation of $f$.
Along with \smtlib, we require $f$ to be a total function, thus it must return a domain element if all its parameters are domain elements.
In case one or more parameters are \udef, $\eval_f$ may still return a domain value, provided the value can be computed independently of the unspecified parameters; otherwise, it returns \udef.
For a term $t$ defined by $f$, we may write $\eval_t$ for $\eval_f$.

\begin{example}
	Assume term $t = +(t_1, t_2)$ of sort $\mathbb{R}$ with $+$ as real addition. Thus, $\eval_+(4,3) = 7$, and $\eval_+(2, \udef) = \udef$.
	Let $u = \times(t_1, t_2)$ be a multiplicative term in $\mathbb{R}$ then $\eval_{\times}(0, \udef) = 0$ and $\eval_{\times}(2, \udef) = \udef$.
	Assume the constraint term $t_3 \leq 0$, then $\eval_{\leq 0}(2) = \false$.
\end{example}

%% file: rules.tex
\newcommand{\decidesto}{\rightarrowtail\xspace}
\newcommand{\propsto}{\rightarrowtail^*\xspace}
\newcommand{\just}{\ensuremath{\mathsf{just}\xspace}}
\newcommand{\feasible}{\ensuremath{\mathsf{feasible}\xspace}}
\newcommand{\infeasible}{\ensuremath{\mathsf{infeasible}\xspace}}
\newcommand{\explain}{\ensuremath{\mathsf{explain}}\xspace}
\newcommand{\consistent}{\ensuremath{\mathsf{consistent}\xspace}}
\newcommand{\satisfiable}{\ensuremath{\mathsf{satisfies}}\xspace}

\newcommand{\trail}[1]{\ensuremath{\llbracket#1\rrbracket}\xspace}
\newcommand{\emptytrail}{\trail{}}
\newcommand{\sstate}[2]{\ensuremath{\langle #1, #2\rangle}\xspace}
\newcommand{\state}[1]{\sstate{#1}{\mathcal{C}}}

To solve an \smt formula $\varphi$, we need to find an assignment to all terms in $\varphi$.
For the rest of this text, we consider $\varphi$ and the set of all terms in the search problem $T$ as fixed.

A \emph{term assignment} $t\mapsto\alpha$ assigns to a term $t\in T$ of sort $\mathcal{S}$ a value $\alpha \in \dom(\mathcal{S})$.
Given a set $S$ of term assignments, we say that $S$ is \emph{unambiguous} if for every $(t_1 \mapsto \alpha_1) \in S$ and $(t_2 \mapsto \alpha_2)\in S$, whenever $t_1 = t_2$ then $\alpha_1 = \alpha_2$.
As we only consider unambiguous sets of term assignments we can define the \emph{term assignment function} of $S$ as 
\[
\nu_S(t) = \begin{cases}
	\alpha & (t \mapsto \alpha)\in S\\
	\udef & \textrm{otherwise}
\end{cases}
\]
We say that $S$ is \emph{consistent}, denoted as $\consistent(S)$, if and only if for all $(t \mapsto \alpha)\in S$ with $t = f(t_1,\dots,t_m)$ we have that $\eval_f(\nu_S(t_1),\dots, \nu_S(t_m))$ is either $\udef$ or $\alpha$.
We define the closure $\bar{S}$ of a term assignment $S$ inductively: We start with ${\bar S} = S$; as long as there is a term $t\in T$ such that $\nu_{\bar S}(t) = \udef$ and $\eval_t(\nu_{\bar{S}}(t_1),\dots,\nu_{\bar{S}}(t_m)) = \alpha$ with $\alpha\neq\udef$, add $(t\mapsto\alpha)$ to $\bar S$.
Intuitively, the assignment $\bar S$ extends $S$ by exhaustively applying the evaluation functions until a fixed point is reached.
Note that the closure of the empty term assignment set $\bar\emptyset$ contains all term assignments induced by the background theories.
We assume that $\bar\emptyset$ is consistent.
Further note that a consistent $S$ does not guarantee $\bar S$ to be consistent.
Given a term assignment $S$ such that $\bar S$ is consistent and a yet unassigned term $t$, we denote $\feasible(S,t)\subseteq \sort(t)$ as the set of values that can be assigned to $t$ such that $\bar S$ remains consistent.
\begin{example}
	Assume a term assignment with integer terms $x$ and $y$:
	\[S = \{(x+y \teq 0) \mapsto \true,\ x \mapsto 2,\ y \mapsto 3\}\]
	then $\consistent(S)$ holds.
	Towards constructing the closure $\bar S$, the evaluation function $\eval_+$ is applied such that we get $\eval_+(2,3) = 5$ and add the term assignment $(x+y) \mapsto 5$ to $\bar S$.
	Then $\eval_\teq(x+y, 0) = \false$.
	This, however, contradicts $( (x+y \teq 0) \mapsto \true) \in S \subseteq \bar S$, therefore, $\bar S$ is not consistent.

	Consider integer term $z$ and the term assignment
	$S' = \{(z^2 \teq 1) \mapsto \true\}$,
	then both $S'$ and $\bar{S'}$ are consistent and $\feasible(S',z) = \{-1, 1\}$.
\end{example}

A \emph{trail} $M$ is a sequence of term assignments that captures the current search state.
We write $\llbracket\cdot\rrbracket$ to denote a trail and treat $M$ as a set of term assignments whenever required. By slight abuse of notation, we write $\llbracket M, t\mapsto\alpha\rrbracket$ for the extension of trail $M$ with an assignment to $t$.
Each assignment on the trail is either a \emph{decision} or a \emph{propagation}.
For a decision, the assigned value was chosen without any reasoning (usually) out of multiple options, on the other hand, a propagation assigns the only feasible value, i.e.\ $\feasible(M,t) = \{\alpha\}$, supported by theory specific reasoning.
To distinguish between propagated and decided term assignments on a trail we may write $t\propsto\alpha$ or $t\decidesto\alpha$, respectively.
The decision level of $M$ is the number of decisions on $M$. Whenever appropriate, we append a suffix to a trail to make its decision level explicit.
We may also write $t\decidesto_n\alpha$ to denote the $n$-th decision on a trail.
During the search, we aim to extend $M$ while keeping the consistency of $\bar M$.

We extend the term assignment function to evaluate clauses. Given a clause $C$, and a set of term assignments $S$, we define
\[
\nu_S(C) = \begin{cases}
	\true & \exists\, l\in C \textrm{ . } \nu_S(l) = \true \\
	\false &  \forall\, l\in C \textrm{ . } \nu_S(l) = \false \\
	\udef & \textrm{otherwise}
\end{cases}
\]
We say that a term assignment $S$ \emph{satisfies} a set of clauses $\mathcal{C}$, denoted by $\satisfiable(S,\mathcal{C})$, if all clauses are satisfied:
\begin{align*}
\satisfiable(S,\mathcal{C}) =\,& \consistent(\bar{S}) \\
                          \,\land\,& (\forall\, t\in \terms(\mathcal{C})\textrm{ . } \nu_{\bar S}(t)\neq\udef)\\
                          \,\land\,& (\forall\, C\in \mathcal{C} \textrm{ . } \nu_{\bar S}(C) = \true)
\end{align*}
We further say that $S$ is \emph{infeasible} for $\mathcal{C}$, denoted as $\infeasible(S,\mathcal{C})$, when $S$ cannot be extended such that the clauses are satisfied:
\[
\infeasible(S,\mathcal{C}) = \nexists\, S' \textrm{ . } S' \supseteq S \,\land\, \satisfiable(S',\mathcal{C})
\]
When we reach a trail $M$ such that $\satisfiable(M,\mathcal{C})$, then we can construct a model $I$ using $\bar M$ and deduce that $\mathcal{C}$ are satisfiable.%
\footnote{This is formally proven in Lemma~\ref{lem:model-existence} of the appendix (Section~\ref{sec:proofs}).}

\subsection{Theory Reasoning and Explanations}\label{sec:pluginExpl}
\paragraph{Theory Plugins.}
The general \mcsat search is independent of any background theory. As mentioned above, the only sort natively supported in all \mcsat style search procedures is the Boolean sort $\mathbb{B}$.
In order to add an \smtlib theory including its sorts and interpreted function symbols, a \emph{theory reasoning plugin} $\mathcal{T}$ is attached to the \mcsat search.
A plugin $\mathcal{T}$ has a set of sorts $\sort(\mathcal{T})$ which values it can interpret as well as a set of functions $\func(\mathcal{T})$ on those sorts.
We say that a plugin can \emph{handle} any sort that is has and \emph{supports an interpreted function} $f$ if it provides $\eval_f$.
While more than one plugin may be able to handle a particular sort, we require all plugins supporting an interpreted function to agree on its evaluation.
A plugin $\mathcal{T}$ \emph{supports an uninterpreted function} $u$ by ensuring congruence, i.e.\ that two terms $t_1, t_2$ of $u$ are assigned the same value if their sub-terms are assigned the same values.
Usually there is one dedicated plugin to ensure congruence for all uninterpreted functions (see Section~\ref{sec:pluginUf}).
Each interpreted or uninterpreted function must be supported by at least one theory plugin to ensure congruence and correct interpretation.
We further say that a plugin \emph{supports a term} if it is constructed by a supported function.

Assume a plugin $\mathcal{T}$ and a supported term $t$ that contains a non-supported sub-term $t'\in\subterms(t)$. Then $\mathcal{T}$ can handle $\sort(t)$. For $\mathcal{T}$, $t'$ is an atomic term that cannot be evaluated any further, but it can assign a value to $t'$ and use any assigned value to evaluate $t$.
We say that $t'$ is a \emph{variable} for $\mathcal{T}$.
If a supported term $t$ contains a non-supported $t'\in\subterms(t)$, $\mathcal{T}$ treats $t'$ as an atomic term, the same way as a constant that cannot be further evaluated.
An uninterpreted atomic term is considered a \emph{variable} for the respective plugin.
It may assign values to $t'$ as for any constant and must handle assignments to $t'$ by a different plugin.
\begin{example}\label{ex:termstructure}
	\definecolor{colorB}{RGB}{0, 0, 0}
	\definecolor{colorUF}{RGB}{230, 159, 0}   %
	\definecolor{colorNRA}{RGB}{0, 158, 115}  %
	
	Consider the formula $\varphi = [f(x+y) + 2y \leq 0] \land [f(0) - y \teq 0]$ where $x,y$ are of sort integer~$(\mathbb{Z})$.
	The term structure of $\varphi$ is depicted below. Rounded corners indicate sort $\mathbb{B}$, all other terms are of sort $\mathbb{Z}$.
	Colors indicate the responsible plugin for each term, where \textcolor{colorNRA}{green} and \textcolor{colorUF}{orange} indicate the plugins of integer arithmetic $\mathcal{I}$ and uninterpreted functions $\mathcal{U}$, respectively.
	Boolean terms, which are handled by the \mcsat reasoning procedure directly, are depicted black.

\begin{center}
\begin{forest}
for tree={
	s sep=4mm,
	l sep=0mm,
	l=6mm,
	draw,
	inner sep=1.5pt,
	minimum size=0pt,
	outer sep=0pt
}
[$\land$, draw=colorB, rounded corners=2pt
[${\leq\,}0$, edge={colorB}, draw=colorNRA, rounded corners=2pt, font=\scriptsize
[$+$, edge={colorNRA}, draw=colorNRA
[$f$, edge={colorNRA}, draw=colorUF
[$+$, edge={colorUF}, draw=colorNRA
[$x$, edge={colorNRA}, draw=colorNRA]
[$y$, edge={colorNRA}, draw=colorNRA]
]
]
[$*$, edge={colorNRA}, draw=colorNRA
[$2$, edge={colorNRA}, draw=colorNRA]
[$y$, edge={colorNRA}, draw=colorNRA]
]
]
]
[${\teq\,}0$, edge={colorB}, draw=colorNRA, rounded corners=2pt, font=\scriptsize
[$-$, edge={colorNRA}, draw=colorNRA
[$f$, edge={colorNRA}, draw=colorUF
[$0$, edge={colorUF}, draw=colorNRA]
]
[$y$, edge={colorNRA}, draw=colorNRA]
]
]
]
\end{forest}
\end{center}
Consider the term $f(x+y)$ which is of sort $\mathbb{Z}$. It is a variable for $\mathcal{I}$, thus the plugin assigns and reads the term's (integer) value like any constant.
The $\mathcal{U}$ plugin can do the same and ensures the congruence property of $f$ when its sub-term $x+y$ has a value assigned.
The same happens for the unary constraint terms ($\leq\! 0$ and $\teq\! 0$) which are variables in \mcsat's $\mathbb{B}$ reasoning. $\mathcal{I}$ can set their boolean value (by evaluating the constraints using their evaluation function) or the \mcsat search can force them to be either $\true$ or $\false$ using Boolean reasoning, requiring $\mathcal{I}$ to set values accordingly.

Note that in this example constraints have been normalized as comparisons to $0$ and defined as unary constraints. This is common in computer algebra and, of course, a possible alternative formalization in an \mcsat procedure that is actually implemented in \yices.
However, for better readability, we are using binary comparison constraints in other examples.
\end{example}

\paragraph{Explanations.}
While different plugins interact with each other using the assigned terms on the trail, they need to translate theory-specific reasoning for those assignments into the common \mcsat search procedure. This is done using a plugin-specific \emph{explanation function}.
Given a formula $\varphi$ as the search problem, a trail $M$, and a (constraint) term $p$ of sort $\mathbb{B}$, then lemma clause $E$ is constructed by
\[ E = (e_1,\dots,e_n, p) = \explain(M, p) \]
fulfilling the following conditions:
\begin{enumerate*}[label=(\roman*)]
	\item \label{exp:valid} $E$ is a valid lemma in the search problem, i.e. for every interpretation $I$ such that $\varphi^I=\true$, there exists an $e\in E$ such that $e^I = \true$.%
	\footnote{Note that this is a weaker definition than presented in previous literature which requires a valid lemma in \textbf{any} interpretation, i.e. for any interpretation $I$, there exists an $e\in E$ such that $e^I = \true$.}
	\item \label{exp:just} $E$ justifies $p$, i.e. for every $e_i$ with $1\leq i \leq n$ we have $\nu_{\bar M}(e_i) = \false$.
	\item \label{exp:fb}   $E$ may introduce new literals, as long as they stem from a finite basis.
\end{enumerate*}
The implementation of $\explain$ is fully dependent on the theory plugin that produces the lemma.
In fact, admitting an $\explain$ function is all a background theory must support for integrating it in an \mcsat search (c.f.~\cite{mcsat}).
Note that the explanation function might generate valid theory lemmas (independent of $\varphi$), thus ensuring for any interpretation $I$, there exists $e\in E$ such that $e^I=\true$.
\begin{example}\label{ex:explain}
	Assume a search in real arithmetic with real terms $x$ and $y$ reaches a trail
	\[M = \llbracket (xy \leq 0) \mapsto\true, x \mapsto 1, y\mapsto 1 \rrbracket\]
	which clearly leads to a not consistent $\bar M$.
	An explanation function $E = \explain(M, \bot)$ is used to generate an explanation clause that translates this arithmetic reasoning to a clause:
	\begin{align*}
	E =& \left( (xy\leq 0) \,\land\, (x > 0) \,\land\, (y > 0) \implies \bot \right) \\
	  =& \left( (xy\leq 0) \,\land\, (x > 0) \implies (y \leq 0) \right) \\
	  =& \left( \lnot (xy\leq 0) \,\lor\, \lnot (x > 0) \,\lor\, (y \leq 0) \right)
	\end{align*}
	Clearly, this fulfills the requirements of an explanation function. It is a valid lemma (requirement \ref{exp:valid}), as it is true for any assignment to $x$ and $y$ which can be seen by trivial algebraic reasoning.
	It justifies $\bot$ on the trail $M$ (requirement \ref{exp:just}), as $\lnot (xy\leq 0)$ is on the trail in opposite polarity, and $\nu_{\bar M}(\lnot (x>0))$ as well as $\nu_{\bar M}(y \leq 0)$ are \false.
	The finite basis condition \ref{exp:fb} is a property of the explanation procedure as a whole.
	The explanation introduced the new terms $x>0$ and $y\leq 0$ and these terms must come from a finite set of terms the procedure can generate.
	For example, it must not generate arbitrary many terms like $\{x>0,\, x>1,\, x>2,\,\dots\}$ as this would violate termination.
\end{example}

During the search the explain function is used in two cases:
\begin{enumerate*}[label=(\roman*)]
	\item Whenever theory-specific procedures detect that $\bar M$ is inconsistent, then $C = \explain(M,\bot)$ is used to derive the \emph{conflict clause} $C$. By the construction requirements of the explanation function, this excludes the current trail and forces the search to pivot.
	\item When propagating a term assignment, i.e. when a theory related procedure detects that $\feasible(M,t) = \{\alpha\}$ for a yet unassigned term $t$, we use $J = \explain(M, t = \alpha)$ to generate a \emph{justification} $J$ that explains the propagation with regard to the current trail $M$. By construction of $J$, we have that $(t = \alpha)\mapsto\true$ on $M$ and, thus, $(t \propsto\alpha)$ is added to $M$. We denote $J = \just(t\propsto\alpha)$ and write $J = \just(t)$ as a shorthand.
	It must be available whenever required by conflict resolution as long as the propagation is on the trail.
	This enables justifications to be calculated lazily, i.e.\ only when required by conflict analysis. This is important, as calls to the \explain\ function might be costly, depending on the theory.
\end{enumerate*}

While \mcsat performs all Boolean propagations, any propagation for other sorts is optional.
Propagations are helpful for performance in general, constructing the required justification may be infeasible in many cases and a plugin is not required to perform a possible propagation.
In case a propagation of term $t$ to value $\alpha$ is detected, but no explanation can be constructed (with reasonable effort), it is, in general, still beneficial to add $t\decidesto\alpha$ to $M$ as a decision (c.f.~\cite{cav25}).

\subsection{Search Rules}

As in the previous publications, we are describing the \mcsat search as a system of symbolic proof rules.
Given a set of clauses $\mathcal{C}$, we start with an empty trail $M = \emptytrail_0$ and end in one of the two terminal states \ssat or \susat.

An active search can be in one of two states: \emph{search state} or \emph{conflict state}.
While in a search state, $M$ gets extended with either decisions or propagations until everything is assigned or $\bar M$ is detected to be inconsistent.
In a conflict state, we aim to recover from an inconsistent $\bar M$ or detect the unsatisfiability of $\mathcal{C}$.
We explain the conflict as a conflict clause $C$, resolve $C$ with previous propagations and remove assignments from $M$ until a relevant decision is reached.
In the rules below, we denote a search state as a pair $\state{M}$ of trail $M$ and a set of clauses $\mathcal{C}$.
A state in conflict mode with conflict clause $C$ is written as $\state{M} \vdash C$.
We denote rules that take a conflict state as \emph{conflict rules}, all other rules as \emph{search rules}.

\begin{runningexample}
	Let $\mathcal{C} = \{C_1, C_2, C_3, C_4, C_5\}$ be an instance over propositional logic $\mathbb B$ and real arithmetic $\mathbb R$.
	\setlength{\abovedisplayskip}{0pt}
	\setlength{\belowdisplayskip}{0pt}
	\begin{align*}
		C_1 &= (x^2 \teq 4 \lorS y \teq 8) & C_4 &= (x > 5 \lorS x + y \teq 1) \\
		C_2 &= (x^2 + 2y > y^2 \lorS y \teq 1) & C_5 &= (x \teq 7y \lorS x \tneq y + 3) \\
		C_3 &= (1 < -x \lorS (x - 2)y \tneq 3)
	\end{align*}
	Here $x$ and $y$ are uninterpreted constants of sort $\mathbb{R}$.
	In this running example we denote the number of applied rules in the superscript of the trail.
	We start the search in the state $\langle M^0_0, \mathcal{C}\rangle$.
	By definition, we have $M^0_0 = \llbracket \rrbracket_0$.
	For brevity we write $p$, $p_n$, and $p^*$ on the trail instead of $p\decidesto\true$, $p\decidesto_n\true$, and $p\propsto\true$, respectively.
\end{runningexample}

\newcommand{\becomes}{\ensuremath{\quad\longrightarrow\quad}}

\begin{rulebox}{Decide}
Assume a trail $M_n$ and a term $x$ such that $\nu_{\bar M}(x) = \udef$, then any value $\alpha \in \feasible(M, x)$ can be decided for $x$.
\[\state{M_n} \becomes \state{\llbracket M, x\decidesto_{(n+1)}\alpha\rrbracket}\]
\end{rulebox}
While a general term can be decided upon, in practice, deciding is limited to variables, i.e. terms that are atomic for a dedicated theory plugin. All non-atomic terms can be evaluated using the plugin's $\eval$ function and, thus, will get an assignment using the closure of the trail $\bar M$.
Although more than one theory plugin can support $\sort(t)$, we usually dedicate one plugin to make decisions for each particular sort.
While this rule does not limit the choice for $x$ and $\alpha$, in any practical implementation, those choices are important considerations for the performance of the solver. Heuristics are essential to achieve reasonable performance in practice~\cite{cav25,l2o}.

\begin{runningexample}
	In the first search state $\langle M^0_0,\mathcal{C}\rangle$, we decide $x \decidesto_1 2$ using \nameref{rule:Decide}:
	\[\langle \llbracket x \decidesto_1 2 \rrbracket^1_1,\mathcal{C}\rangle\] and denote the new trail $M_1^1$.
	Note that in the new search state $\langle M^1_1,\mathcal{C}\rangle$ we cannot decide on the propositional variable $(x - 2)y \tneq 3$, because $\nu_{\bar {M^1_1}}((x - 2)y \tneq 3) = \true$,
	but on $x^2 + 2y > y^2 \decidesto_2 \true$ to get:
	\[\langle \llbracket x \decidesto_1 2, (x^2 + 2y > y^2)_2\rrbracket^2_2,\mathcal{C}\rangle\]
\end{runningexample}

\begin{rulebox}{Propagate}
Assume a clause of the form $C = \explain(M, x = t) = (l_1\land\dots\land l_\ell \implies x = t) = (\lnot l_1 \lor \dots \lor \lnot l_\ell \lor x = t)$ with a term $t$ such that $\nu_{\bar M}(t) = \alpha\in\sort(t)$ can be derived from the clause database.
By construction, for all $1\leq i \leq \ell$ we have that $\nu_{\bar M}(l_i) = \true$, thus we can set $\alpha$ for $x$.
\[\state{M_n} \becomes \state{\llbracket M, x\propsto\alpha\rrbracket_n}\]
\end{rulebox}
Note that not all possible propagations are actually required to be performed by a theory plugin.
Even if a plugin realizes that there is only one feasible value for a term, it may decide not to propagate the value, because a justification is either too expensive or even impossible to calculate.
In such cases, it is beneficial to force a decision of that variable soon (c.f.~\cite{cav25}).
For completeness of the system, propagations are not required, as the system could also decide an assignment to the propagated value.
For infinite domains, it might even be impossible to generate any justification as this would violate the finite basis requirement of the \explain function (c.f.\ Example~\ref{ex:explain}.
\begin{runningexample}
	We continue with $M_2^2$ as the current trail.
	The clause $C_3$ has only one literal that does not evaluate to $\false$. 
	So we use Boolean unit propagation to get the new state: \[
	\langle\llbracket x \decidesto_1 2, (x^2 + 2y > y^2)_2, ((x - 2)y \tneq 3)^* \rrbracket^3_2, \mathcal{C}\rangle
	\]
	This is possible because we take $\explain(\langle M^2_2, \mathcal{C}\rangle, (x - 2)y \tneq 3) = C_3 = (1 < -x \lorS (x - 2)y \tneq 3) = (\lnot(1 < -x) \limpS (x - 2)y \tneq 3)$ and $\nu_{\bar{M^2_2}}(\lnot(1 < -x)) = \true$, so we can propagate $(x - 2)y \tneq 3 \propsto \true$ with justification $C_3$.
	We can perform the similar propagation $x + y \teq 1 \propsto \true$ with $\just(x + y \teq 1) = C_4 = (x > 5 \lorS x + y \teq 1)$ and get: \[\langle\llbracket x \decidesto_1 2, (x^2 + 2y > y^2)_2, ((x - 2)y \tneq 3)^*, (x + y \teq 1)^* \rrbracket^4_2,\mathcal{C}\rangle\]
	Finally, we perform the arithmetic theory propagation $y \propsto -1$ with justification 
	$ \explain(\trail{\dots}, y \teq -1) = (x \tneq 2 \lorS \lnot(x + y \teq 1) \lorS y \teq -1)$ and get:
	\[
	\langle\llbracket x \decidesto_1 2, (x^2 + 2y > y^2)_2, ((x - 2)y \tneq 3)^*,(x + y \teq 1)^*, y \propsto -1 \rrbracket^5_2,\mathcal{C}\rangle
	\]
\end{runningexample}

\begin{rulebox}{Conflict}
Given $\infeasible(M,\mathcal{C})$, let $E = \explain(M,\bot) = (e_1\land\dots\land e_\ell \implies \bot) = (\lnot e_1\lor\dots\lor\lnot e_\ell)$ such that for all $1\leq i \leq \ell$ we have $\nu_{\bar M}(e_i) = \false$, then the search switches to the conflict mode with $E$ as its initial conflict clause:
\[\state{M_n} \becomes \state{M_n} \vdash E\]
\end{rulebox}
\begin{runningexample}
	We continue in the search state $\langle M^5_2,\mathcal{C}\rangle$, with $M^5_2 = \llbracket x \decidesto_1 2,$ $ \dots, y \propsto -1 \rrbracket_2$.
	Note that $C_5 = (x \teq 7y \lorS x \tneq y + 3) \in \mathcal{C}$, so we get $\nu_{\bar{M^5_2}} (C_5) = \false$, which is a propositional conflict.
	Therefore, we use rule \nameref{rule:Conflict} with $E = \explain(\langle M^5_2, \mathcal{C} \rangle, \bot) = C_5$ to get (with same trail $M^6_2 = M^5_2$):
	\[ \langle M_2^6,\mathcal{C}\rangle \vdash C_5 \]
\end{runningexample}

\begin{rulebox}{Resolve} When in conflict mode, we can resolve a previous made propagation using its justification if the propagated term $x$ occurs in the conflict clause $C$, i.e. $x\in \terms(C)$:
	\[\state{\llbracket M, x\propsto \alpha\rrbracket_n} \vdash C \becomes \state{M_n} \vdash R \]
	Then by the construction of the propagation, the justification is of the form $\just(x) = (l_1\land\dots\land l_\ell\implies x \teq t)$ and the resolved conflict clause $R = C[x/t] \cup (\lnot l_1\lor\dots\lor \lnot l_\ell)$.
\end{rulebox}
Before continuing with the running example, let us first consider two other applications of \nameref{rule:Resolve}:
\begin{example}
	Assume $\mathbb{B}$ variables $a$, $b$, and $c$, the clause $D = (a \lor c)$, and a trail that assigns $a$ to $\false$. Then the propagation rule can use $D$ for the propagation $\lnot a \implies c$ (by simplifying $c \teq \true$).
	Later, a conflict clause $C = (b \lor \lnot c)$ becomes $R = (b \lor a)$, which is Boolean resolution.
\end{example}
\begin{example}
	Let $y$ and $z$ be integer variables, $D = (y < 3 \lor z \teq 5)$ a clause, and a trail that assigns $y$ to~$6$. Then the propagation rule can use $D$ for the propagation $\lnot (y < 3) \implies z \teq 5$.
	Later a conflict clause $C = (y \cdot z \teq 3 \lor \lnot (z \teq 5))$ becomes $R = (y \cdot 5 \teq 3 \lor y < 3)$, which is arithmetic resolution.
\end{example}

\begin{runningexample}
	In the conflict state \[
	\langle\llbracket x \decidesto_1 2, (x^2 + 2y > y^2)_2, ((x - 2)y \tneq 3)^*,(x + y \teq 1)^*, y \propsto -1 \rrbracket^6_2,\mathcal{C}\rangle \vdash C_5
	\]
	we can apply the rule \nameref{rule:Resolve}, because $y \in \terms(C_5)$.
	Recall $\just(y) = (x \tneq 2 \lorS \lnot(x + y \teq 1) \lorS y \teq -1)$ so the resolvent is $R = (x \teq 7y \lorS x \tneq y + 3)[y/-1]\cup(x \tneq 2 \lorS \lnot(x + y \teq 1)) = (x \teq -7 \lorS x \tneq 2 \lorS \lnot(x + y \teq 1))$.
	This yields the new conflict state:
	\[
	\langle \llbracket x \decidesto_1 2, (x^2 + 2y > y^2)_2, ((x - 2)y \tneq 3)^*, (x + y \teq 1)^*\rrbracket^7_2, \mathcal{C}\rangle \vdash R
	\]
	Because $(x + y \teq 1) \in R$, we can perform a similar resolution on $x + y \teq 1 \propsto \true$, which has the justification $C_4 = (x > 5 \lorS x + y \teq 1)$.
	So $R' = $ $R[x + y \teq 1/\top]\cup(x > 5) = $$ (x \teq -7 \lorS x \tneq 2 \lorS \lnot(\top) \lorS x > 5) = (x \teq -7 \lorS x \tneq 2 \lorS x > 5)$.
	The new conflict state is: \[
	\langle \llbracket x \decidesto_1 2, (x^2 + 2y > y^2)_2, ((x - 2)y \tneq 3)^* \rrbracket^8_2, \mathcal{C} \rangle \vdash R'
	\]
	Note that for the propagation $(x - 2)y \tneq 3 \propsto \true$, with justification $(1 < -x \lorS (x - 2)y \tneq 3)$, the \nameref{rule:Resolve} rule cannot be applied, because $((x - 2)y \tneq 3) \notin \terms(R')$.
\end{runningexample}

\begin{rulebox}{Consume} In conflict mode, whenever a propagation is not part of the conflict clause, it may be omitted. Assume $x\notin \terms(C)$, then
\[\state{\llbracket M, x\propsto \alpha\rrbracket_n} \vdash C \becomes \state{M_n} \vdash C \]
\end{rulebox}
\begin{runningexample}
	In the current conflict state, we can use rule \nameref{rule:Consume} to get rid of the last trail element and reach state
	\[
	\langle \llbracket x \decidesto_1 2, (x^2 + 2y > y^2)_2 \rrbracket^9_2, \mathcal{C} \rangle \vdash R',
	\]
\end{runningexample}

\begin{rulebox}{Drop}
When a decision does not influence the current conflict clause $C$, then the decision can be dropped.
Assuming $\nu_{\bar{M_n}}(C) = \false$, any decision of level $n+1$ can be dropped:
\[\state{\llbracket M_{n}, x\decidesto_{n+1} \alpha\rrbracket} \vdash C \becomes \state{M_n} \vdash C \]
\end{rulebox}
In practice, one can calculate the decision level $n$ of $C$ and drop all decisions and corresponding propagations with level greater than $n$ at once.
\begin{runningexample}
	In the conflict state \[
	\langle \llbracket x \decidesto_1 2, (x^2 + 2y > y^2)_2 \rrbracket^9_2, \mathcal{C} \rangle \vdash (x \teq -7 \lorS x \tneq 2 \lorS x > 5),
	\]
	the rule \nameref{rule:Drop} is applied because $\nu_{\bar {M'}}(x \teq -7 \lorS x \tneq 2 \lorS x > 5) = \false$ with $M' = \trail{x \decidesto 2}$:
	\[
	\langle \llbracket x \decidesto_1 2\rrbracket^{10}_1,\mathcal{C}\rangle \vdash (x \teq -7 \lorS x \tneq 2 \lorS x > 5)
	\]
	We cannot apply the \nameref{rule:Drop} further as $\nu_{\bar\emptyset}(x \teq -7 \lorS x \tneq 2 \lorS x > 5) = \udef$.
\end{runningexample}

\begin{rulebox}{Backjump}
When a decision on the trail influences the conflict clause $C$, then undoing the decision changes $\nu_{\bar{M_n}}(C)$ to $\udef$.
Then there exists a subset of literals $C' \subseteq C$ such that $\nu_{\bar{M_n}}(L) = \udef$ for all $L\in C'$.
In case $C' = \{L\}$ then $L$ can be propagated by $C$. Thus a propagation is put on the trail with $\just(L) = C$:
\[\state{\llbracket M_{n}, x\decidesto_{n+1} \alpha\rrbracket} \vdash C \becomes \sstate{\llbracket M_n, L\propsto \true \rrbracket}{\mathcal{C}} \]
Otherwise, we need to perform a Boolean decision and pick a choice $L\in C'$ to satisfy $C$:%
\[\state{\llbracket M_{n}, x\decidesto_{n+1} \alpha\rrbracket} \vdash C \becomes \sstate{\llbracket M_n, L\decidesto_{n+1} \true \rrbracket}{\mathcal{C}} \]
When choosing $L$ we need to make sure that $x\in\terms(L)$, i.e. $x\neq L$.
\end{rulebox}
The last condition for the second case is important to ensure termination.
In case $x$ is a literal as well as a sub-term of another literal (for example if $x$ is a propositional term and $C = \{x,\, f(x)\}$), we need to ensure not to replace a decision on $x$ with another decision on $L$ (indefinitely).
\begin{runningexample}
	We are in the conflict state: \[
	\langle \llbracket x \decidesto_1 2\rrbracket^{10}_1,\mathcal{C}\rangle \vdash (x \teq -7 \lorS x \tneq 2 \lorS x > 5)
	\]
	Here the decision $x \decidesto_1 2$ affects the evaluation of the conflict clause $R'$ because $\nu_{\bar\emptyset}(R') = \udef$.
	Before leaving the conflict state, we apply the \nameref{rule:Learn} rule (see below) in order to add the final resolvent $R'$ to the clause set.
	Then we can apply the \nameref{rule:Backjump} to leave the conflict state.
	Since multiple literals of $R'$ evaluate to $\udef$, we replace the decision on the trail with one of the following decisions in $R'_u = R' = \{(x \teq -7)_1, (x \tneq 2)_1, (x > 5)_1\}$ and pick $x \teq -7 \decidesto_1 \true$.
	Then the new state is
	\[
	\langle\llbracket (x \teq -7)_1 \rrbracket^{11}_1, \mathcal{C}\cup\{R'\}\rangle
	\]
\end{runningexample}

\begin{rulebox}{Sat}
Assume a trail $M$ such that $\satisfiable(M,\mathcal{C})$, then
\[\state{M} \becomes \mathsf{sat}\]
\end{rulebox}
Note that this is also the case when for each term $t$, we have that $\nu_{\bar M}(t) \neq \udef$ and not $\infeasible(M)$.
In the \ssat case, the model can be constructed by taking the closure of the last trail and assign all still unassigned terms with arbitrary consistent values.
\begin{runningexample}
	We propagate $x\propsto -7, x + y \teq 1 \propsto \true$ and $y \propsto 8$ to get:
	\[
	\langle \llbracket (x \teq -7)_1, x \propsto -7, (x + y \teq 1)^*, y \propsto 8 \rrbracket^{14}_1, \mathcal{C} \cup \{R'\}\rangle
	\]
	The Sat rule can be applied here because $\satisfiable(M_1^{14},\mathcal{C}\cup\{R'\})$ holds.
\end{runningexample}

\begin{rulebox}{Unsat}
Assume a trail $M$ and a conflict at decision level $0$, then
\[\state{M_0} \vdash C \becomes \mathsf{unsat}\]
\end{rulebox}

\subsection{Learning and Forgetting}

The rules presented above are sufficient to perform an \mcsat search.
For practical performance reasons it is, however, beneficial to add, i.e. learn, clauses to the clause database $\mathcal{C}$ as well as delete, i.e. forget, previously learned clauses.
Note that any conflict clause $C$ is logically implied by the clause database $\mathcal{C}$.
This is because any explanation clause is implied by $\mathcal{C}$ by requirement \ref{exp:valid} of \explain\ and the resolvent of two implied clauses with rule \nameref{rule:Resolve} keeps this property.%
\footnote{This is formally proven in Lemma~\ref{lemma:clausesImplied} of the appendix (Section~\ref{sec:proofs}).}
Adding or removing an implied clause to a set of clauses does not change the models.

There are two opportunities to learn clause:
\begin{enumerate*}[label=(\roman*)]
	\item any clause during conflict analysis, i.e. the original conflict clause or any resolvent thereof,
	as well as
	\item any justification that was generated during a propagation.
\end{enumerate*}
In practice, the most important clause to learn tends to be the last conflict clause right before exiting conflict analysis with rule \nameref{rule:Backjump} as this clause is used for pivoting the search.

\begin{rulebox}{Learn}
	Assume a search state in conflict with clause $C$, then $C$ can be added to the clause database $\mathcal{C}$ if $C\notin\mathcal{C}$: 
	\[\state{M} \vdash C \becomes \sstate{M}{\mathcal{C}\cup\{C\}}\vdash C\]
\end{rulebox}

\begin{rulebox}{Learn-Justification}
	Assume the search propagated $x\propsto \alpha$, then $J = \just(x\propsto \alpha)$ can be added to the clause database $\mathcal{C}$ if $J\notin\mathcal{C}$: 
	\[\state{\llbracket M, x\propsto \alpha\rrbracket} \becomes \sstate{\llbracket M, x\propsto \alpha\rrbracket}{\mathcal{C}\cup\{J\}} \]
\end{rulebox}
If, for practical performance considerations, the justification $J$ is calculated lazily, i.e. only when required by conflict analysis, it can still be learned upon calculation.
Since, in the proof system, we assume that justifications are available right away, we limit applications of \nameref{rule:Learn-Justification} to be used directly after \nameref{rule:Propagate} for technicalities the termination proof.

\begin{rulebox}{Forget} Assume clause $C\in\mathcal{C}$ is a learned clause, then we may forget $C$:
	\[\state{M} \becomes \sstate{M}{\mathcal{C}\setminus\{C\}}\]
\end{rulebox}
The \nameref{rule:Forget} rule is optional but important for practical performance.
As for any choice in the search, heuristics can greatly help in deciding which clauses to forget.

It is important to note that a justification $J = \just(x)$ of a propagation $x\propsto\alpha$ on the trail can not be forgotten, as $\just(x)$ is a property of the trail element and not an element of $\mathcal{C}$.
If $J$ had previously been leaned using \nameref{rule:Learn-Justification}, it can be removed from $\mathcal{C}$, however $J$ is available as $\just(x)$ as long as $x\propsto\alpha$ is remains on the trail.

\subsection{Soundness, Completeness, and Termination} 

The main theorem of~\cite{mcsat} is the soundness and completeness of the calculus.
For our adapted set of rules, we provide the same result and give the proof in the appendix (Section~\ref{sec:proofs}).
\begin{theorem}\label{thm:term}
	Given a set of clauses $\mathcal{C}$, and assuming an explanation function $\explain$, any derivation starting from the initial state $\state{\emptytrail}$ will terminate either in a state $\mathsf{sat}$, when $\mathcal{C}$ is satisfiable (soundness), or in the state $\mathsf{unsat}$, when $\mathcal{C}$ is unsatisfiable (completeness).
\end{theorem}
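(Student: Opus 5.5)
The proof has three parts: soundness of the two terminal states, absence of stuck non-terminal states, and termination.

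\emph{Soundness of terminal states.} If a derivation ends in \ssat, the last rule was \nameref{rule:Sat}, so $\satisfiable(M,\mathcal{C})$ holds for the current trail and clause set. By Lemma~\ref{lem:model-existence}, a model $I$ can be built from $\bar M$, so $\mathcal{C}$ is satisfiable. Learned clauses and forgotten clauses do not change the set of models, since every learned clause is implied by the database by Lemma~\ref{lemma:clausesImplied}. Hence the original $\mathcal{C}$ is satisfiable.

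If a derivation ends in \susat, the last rule was \nameref{rule:Unsat}, applied to a conflict clause $C$ at decision level $0$. By Lemma~\ref{lemma:clausesImplied}, $C$ is implied by $\mathcal{C}$. I will maintain an invariant: every literal of a conflict clause evaluates to $\false$ under $\bar M$. This holds initially by the \nameref{rule:Conflict} rule and requirement~\ref{exp:just}. It is preserved by \nameref{rule:Resolve}, because the substituted literals and the negated antecedents of the justification are false on the shorter trail, and trivially by \nameref{rule:Consume} and \nameref{rule:Drop}.

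At level $0$ the trail contains only propagations. Each propagation is justified by a clause implied by $\mathcal{C}$, so by induction along the trail every model of $\mathcal{C}$ must agree with $\bar M$ on the assigned terms. Such a model would then falsify $C$, contradicting that $C$ is implied. So $\mathcal{C}$ has no model.

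\emph{No stuck states.} I show that every non-terminal state admits some rule.

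In a search state, either $\satisfiable(M,\mathcal{C})$ holds and \nameref{rule:Sat} applies, or not. If not, and $\infeasible(M,\mathcal{C})$ holds, then the theory must produce $\explain(M,\bot)$ and \nameref{rule:Conflict} applies. Otherwise some term of $\terms(\mathcal{C})$ is unassigned with a nonempty feasible set, and \nameref{rule:Decide} applies.

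In a conflict state, look at the last trail element. If it is a propagation, \nameref{rule:Resolve} or \nameref{rule:Consume} applies. If it is a decision, either $C$ is already false at the lower level and \nameref{rule:Drop} applies, or \nameref{rule:Backjump} applies. If the level is $0$, \nameref{rule:Unsat} applies.

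Combining this with termination gives completeness: a derivation from an unsatisfiable $\mathcal{C}$ cannot end in \ssat, by soundness, and cannot stop elsewhere, so it ends in \susat. The same argument shows a satisfiable $\mathcal{C}$ ends in \ssat.

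\emph{Termination, the main obstacle.} By the finite basis condition~\ref{exp:fb}, all literals and terms that can ever appear come from a finite set $B$. I will have to assume, as in~\cite{mcsat}, that \nameref{rule:Decide} on infinite sorts only chooses values that are distinguishable with respect to $B$, so that only finitely many trails are relevant.

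I will then define a lexicographic measure on trails, in the style of CDCL:
\begin{itemize}
\item each decision level is weighted by the number of assignments it holds;
\item a decision on a Boolean literal is ranked above a decision on a value term;
\item propagations are ranked above decisions.
\end{itemize}
The goal is to show that \nameref{rule:Backjump} strictly increases this measure. The side condition $x\in\terms(L)$ in \nameref{rule:Backjump} is exactly what prevents cycling between two decisions on the same term.

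Within a single conflict phase, each conflict rule shortens the trail, so the phase is finite. The remaining problem is \nameref{rule:Forget}, together with learning, which could otherwise loop forever. I plan to handle it with the usual fairness assumption: \nameref{rule:Learn} and \nameref{rule:Forget} are applied only finitely often without progress in the trail measure. \nameref{rule:Learn-Justification} is already restricted to apply right after \nameref{rule:Propagate}. Verifying that this measure strictly increases across \nameref{rule:Backjump} while the trail stays unambiguous and consistent is the delicate step.
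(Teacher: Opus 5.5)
Your soundness and no-stuck-state arguments are fine and essentially coincide with the paper's lemmas on sink states, implied conflict clauses, false conflict clauses and model existence. Your semantic argument for \textsc{Unsat}, that every model of $\mathcal{C}$ must agree with $\bar M$ along the level-$0$ propagations, is a clean replacement for the paper's ``resolve away the remaining propagations''.

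The gap is in termination, at exactly the step you flag as delicate: your measure does not increase under the decision case of \textsc{Backjump} in this calculus.
\textsc{Decide} may assign any unassigned term, including Boolean ones; the running example decides $x^2+2y>y^2$. \textsc{Backjump} may then replace $x\decidesto_{n+1}\alpha$ by $L\decidesto_{n+1}\true$ for any literal $L$ having $x$ as a proper subterm.
When $x$ is itself a Boolean term, as in the paper's own example $C=\{x,f(x)\}$, one Boolean decision is swapped for another at the same position. Your three-tier ranking sees a tie there. Your per-level counts do not help either: the count at level $n+1$ can even drop, because the propagations that followed $x$ were consumed.
The side condition $x\in\terms(L)$, $x\neq L$ only prevents cycling if the measure actually registers it.

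The missing idea is the paper's weight. Compare trails element-wise lexicographically, giving a propagation weight $\Omega+1$ and a decision $x\decidesto\alpha$ weight $|\terms(x)|$, where $\Omega$ is the maximal term size over the finite basis.
The trails before and after a conflict phase share the prefix $M_n$. At the next position they differ by a strictly heavier element in both \textsc{Backjump} cases: a propagation outweighs any decision, and a proper superterm is strictly larger. Whatever followed in the old trail is therefore irrelevant.

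The two extra hypotheses you introduce are not part of the theorem and are not needed, so as planned you would be proving a weaker statement.
\begin{itemize}
\item \emph{Distinguishable values.} The order above ignores values entirely. Every term is assigned at most once, so trails have length at most $|T|$ and there are only finitely many weight sequences. No restriction on the values chosen by \textsc{Decide} is required.
\item \emph{Fairness for learning and forgetting.} \textsc{Learn} only fires inside a conflict phase, on a clause not yet in $\mathcal{C}$, so it is absorbed into the big step of that finite phase. \textsc{Learn-Justification} is only applied directly after \textsc{Propagate}, which already enlarges the trail. \textsc{Forget} is handled, as in the paper, by using a decreasing $|\mathcal{C}|$ as a secondary key behind the trail order.
\end{itemize}
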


\undef\becomes

%% file: plugins.tex
In Section~\ref{sec:pluginExpl} we have outlined the requirements of a theory reasoning plugin and a corresponding explanation function.
Based on these definitions we will provide some examples for specific theories.

\paragraph{Propositional Logic.}
Although not an actual theory plugin, we can define an explanation function for a Boolean conflict.
Recall that the Boolean sort $\mathbb{B} = \{\true, \false\}$ together with the Boolean connectives translated in \ac{cnf} are natively handled by \mcsat.
Given a clause database $\mathcal{C}$ and a trail $M$, a conflict is a clause $C \in \mathcal{C}$ that evaluates to $\false$ under $M$.
The explanation function $\explain(M, \bot)$ simply returns the clause $C$.
$C$ trivially justifies $\bot$ and does not introduce new literals, thus satisfying property \ref{exp:just} and \ref{exp:fb} of an explanation function.
Although $C$ is not a valid lemma in general, it is (trivially) implied by the clause database $\mathcal{C}$ and thus it is a lemma within the search problem.

The same works for propagation. The propagate rule can be used for any clause $C\in\mathcal{C}$ that is unit with regard to $M$. The explanation is simply $C$.
\paragraph{Arithmetic.}
Non-linear real arithmetic has been the first theory to be applied to \mcsat in its original \nlsat publication~\cite{nlsat}.
It is based on \ac{cad} which is a well known technique from computer algebra. The theory can solve polynomial constraints with integer coefficients and defines the sort of reals $\mathbb{R}$ containing all algebraic numbers, i.e. the subset of real numbers that can be the root of a polynomial with integer coefficients.
Note that $\mathbb{R}$ is of infinite size.
To gain a finite basis $\explain$ function, the theory plugin uses \ac{cad} to describe an $n$-dimensional area in which all polynomials of the conflict are sign-invariant.
As all possible decompositions of the (finitely many) input polynomials contain only finitely many such areas, there are only finitely many terms to describe those areas.
For further details on the implementation of this theory we refer to~\cite{nlsat} as well as to the SMT solver \smtrat~\cite{smtrat} which contains one of the most advanced implementations of this \mcsat theory.
In addition to reals, there haven been \mcsat theories for linear integers~\cite{mcsat_inear_int} as well as non-linear integers~\cite{DBLP:conf/vmcai/Jovanovic17}.

\paragraph{Uninterpreted Functions.}\label{sec:pluginUf} The theory of uninterpreted functions (UF) does not contain its own sort.
It rather tracks the assignments of its terms by other theories plugins and detects conflicts and possible propagations using equality graphs (e-graphs).
Therefore, each generated explanation $E$ given by $\explain$ only contains existing literals, trivially fulfilling property~\ref{exp:fb} of an explanation function.
$E$ contains terms from the trail that are equal due to the e-graph, by the correctness of the e-graph, this satisfies \ref{exp:valid}.
The third property \ref{exp:just} is fulfilled by the construction of $E$, as only assigned terms are considered.
Based on the theory UF, an \mcsat implementation of the theory of arrays has been implemented~\cite{DBLP:conf/smt/IrfanG24}.

\paragraph{Further theories.} Further interesting theories that have been implemented in an \mcsat search are the theory of fixed-size bit-vectors (BV)~\cite{mcsat_bv} and the theory of finite fields (FF)~\cite{DBLP:conf/ijcar/HaderKIGK24,DBLP:conf/smt/HaderO24}.

%% file: conclusion.tex
\paragraph{Implementation Notes.}
The general idea of this adapted version of the \mcsat calculus was based on its implementation within \yices~\cite{mcsat2}.
The software architecture of the \mcsat engine includes a general \mcsat core independent of any theory.
Each theory (including the theory of propositional logic) is implemented as a plugin that provides, using a well-defined interface, functionality for decisions, propagations as well as conflict detection and explanation.
Each plugin can register terms and types that it is capable of handling. This lead us to defining the transition systems on term assignments instead of a strict separation in theory and propositional variables.
The core \mcsat engine queries all plugins for all available propagations before proceeding with the next decision. It is up to the individual plugin to choose which propagations to perform. See~\cite{mcsat2,cav25} for a description of the general \mcsat core algorithm in \yices.

While the \mcsat proof system requires an explanation function to be finite basis for proving termination, the implementation in \yices departs from this requirement: by letting the arithmetic variable order for decisions change dynamically using the \ac{vsids} heuristic, explanations generated by \ac{cad} the finite basis cannot be guaranteed.
Although, this is a bad choice for termination, \ac{vsids} improves performance for practical instances significantly~\cite{cav25}.

\paragraph{Calculus Updates. }
Notable changes to the \mcsat proof calculus as presented in this work compared to the original publications~\cite{nlsat, mcsat, mcsat2} are:
\begin{itemize}
	
	\item \textit{Term assignments instead of theory variables.}\quad
	Previous publications defined two distinct kinds of variables: propositional literals and theory variables.
	The former were any constraints over a term in a theory, the latter an uninterpreted constant in a particular theory.
	An (overloaded) \textsf{value} function evaluates literals according to assignment to theory variables.
	We changed this to better match the definitions of functions and sorted terms in \smtlib and allow an arbitrary assignments to terms.
	This enables us to present theory combination in \mcsat closer to implementation, especially in the presence of uninterpreted functions (see Example~\ref{ex:termstructure}).
	
	\item \textit{Combining Boolean and theory reasoning.}\quad 
	Due to the clear distinction between theory variables and propositional literals, previous publications maintained two sets of proof rules
	to cater to both cases.
	This led to similar rules with minor differences, depending on whether they are handling theory terms or the propositional structure.
	In this work, we implemented Boolean reasoning closer to theory reasoning to allow a single set of rules to handle both cases (see, for example, rule~\nameref{rule:Propagate} or rule~\nameref{rule:Decide}).
	While Boolean reasoning is still a special case due to additional rules for conflict analysis, 
	this clarifies the meaning of the rules and avoids minor differences in similar rules.
	
	\item \textit{Common framework for propagation.}\quad
	The original publications of \mcsat did not provide a way to propagate theory values as depicted in rule~\nameref{rule:Propagate}.
	This form of propagation was introduced as a side note in a later publication~\cite{jovanovic2021interpolation}.
	In this work, the \nameref{rule:Propagate} and \nameref{rule:Resolve} rules have been adapted to support both theory and Boolean propagation.
	By adding the \nameref{rule:Learn-Justification} rule, learning a potentially expensive justification for a theory-based propagation has been enabled, which was not possible by \nameref{rule:Learn}.

	\item \textit{Updated \explain function.}\quad
	The original publications require the \explain function to return a valid lemma $E$. A valid lemma is a clause that is true under any interpretation, i.e. $\vDash E$.
	We relax this requirement to $E$ to be true under any model of the search problem $\varphi$, i.e., $\varphi\vDash E$.
	This is a strictly weaker definition, as any valid explanation still fulfills the requirement but allows modeling Boolean conflicts as regular explanations.
	Furthermore, this change gives more flexibility to the implementation of the \explain function for particular theories, for example, allowing it to return an empty clause if the theory's reasoning procedure detects unsatisfiability of $\varphi$.
\end{itemize}

\paragraph{Conclusion.}
In this work we have presented an updated version of the original formal description of the \mcsat calculus.
In contrast to the original version we describe the calculus on arbitrary term assignments and do not distinguish between propositional and theory rules.
We presented all relevant concepts, showed the calculus' rules as well as provided examples.

%% file: acks.tex
\begin{acknowledgments}
The authors thank Ahmed Irfan, Stéphane Graham-Lengrand, and Jasper Nalbach for discussion and feedback.
We further thank the anonymous reviewers for exceptional feedback.
This research was funded in whole or in part by the ERC Consolidator Grant ARTIST 101002685; the the WWTF ICT22-007 grant ForSmart;
the FWF 10.55776/ESP666 grant, and by the SBA Research COMET Center SBA-K1 NGC managed by the FFG.
\end{acknowledgments}

\section*{Declaration on Generative AI}

The authors have not employed any Generative AI tools.

%% file: proofs.tex
We adapt the soundness and correctness proof \cite{mcsat} (Theorem 1) to suite our definitions.
While the proof is more detailed and structured differently, the general ideas remains the same.

\begin{lemma}\label{lemma:sinkStates}
	Given a set of clauses $\mathcal{C}$, in any derivation of the transition system starting from the initial state $\initstate$, any terminating derivation of the transition system ends in \ssat or \susat.
\end{lemma}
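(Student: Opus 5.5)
This is a "no stuck states" argument. We show that every reachable state that is neither $\ssat$ nor $\susat$ admits at least one applicable rule. Then a derivation that terminates, meaning no further rule applies, can only stop in $\ssat$ or $\susat$. We split by the two kinds of non-terminal states, search states $\state{M_n}$ and conflict states $\state{M_n}\vdash C$. Throughout we use the invariant, checked by induction over the derivation, that for every reachable search state $\bar M$ is consistent, and that for every reachable conflict state $\nu_{\bar M}(C)=\false$.

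\emph{Search states.} Consider a search state $\state{M_n}$. If $\satisfiable(M,\mathcal{C})$ holds, then \nameref{rule:Sat} applies. Otherwise we use the invariant that $\bar M$ is consistent and distinguish two cases.
\begin{itemize}
\item[(a)] Some term $t\in\terms(\mathcal{C})$ has $\nu_{\bar M}(t)=\udef$. If $\feasible(M,t)\neq\emptyset$, then \nameref{rule:Decide} applies. Otherwise no extension is consistent, so $\infeasible(M,\mathcal{C})$ holds.
\item[(b)] Every term is assigned, but some clause evaluates to $\false$. Then again $\infeasible(M,\mathcal{C})$ holds, because any extension agrees with $\bar M$ on the assigned terms.
\end{itemize}
In both infeasible situations the assumed explanation function supplies $E=\explain(M,\bot)$, and \nameref{rule:Conflict} applies. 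This step relies on the explanation function being total: a plugin, or Boolean reasoning, must be able to explain every infeasible trail.

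\emph{Conflict states.} Consider $\state{M_n}\vdash C$. If $n=0$, then \nameref{rule:Unsat} applies. Otherwise $n\geq 1$ and we look at the last trail element.
\begin{itemize}
\item If it is a propagation $x\propsto\alpha$, then \nameref{rule:Resolve} applies when $x\in\terms(C)$, and \nameref{rule:Consume} applies otherwise. These two side conditions are complementary.
\item If it is a decision $x\decidesto_{n}\alpha$ on top of $M_{n-1}$, then \nameref{rule:Drop} applies when $\nu_{\bar M_{n-1}}(C)=\false$. Otherwise $\nu_{\bar M_{n-1}}(C)\neq\false$, and by the invariant no literal of $C$ is true under $\bar M_{n-1}$. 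Hence the set $C'$ of literals that are $\udef$ under $\bar M_{n-1}$ is non-empty. If $|C'|=1$, the propagation case of \nameref{rule:Backjump} applies. Otherwise the decision case applies with some $L\in C'$ satisfying $L\neq x$.
\end{itemize}

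The main obstacle is justifying that last choice of $L$. Suppose $C'$ contains at least two literals. Each of them flips from $\false$ to $\udef$ when the decision $x$ is removed, and at most one of them is syntactically equal to $x$, so some $L\in C'$ with $L\neq x$ exists. A secondary point is that the invariant $\nu_{\bar M}(C)=\false$ is preserved by \nameref{rule:Resolve}. By the justification condition, the substitution $C[x/t]$ together with the negated antecedents still evaluates to $\false$ on the shortened trail. Here $t$ is already evaluated to $\alpha$ before $x$ was propagated, so removing $x$ does not change the values of the remaining literals.
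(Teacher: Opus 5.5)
Your proposal is correct and follows the same route as the paper: you show that no reachable non-terminal state is stuck, using Decide, Sat or Conflict for search states, and Resolve, Consume or Drop followed by Backjump or Unsat for conflict states. Your version is more careful than the paper's, since it covers the fully-assigned-but-falsified case and uses the invariant $\nu_{\bar M}(C)=\mathsf{false}$ (the paper's Lemma~\ref{lemma:conflictFalse}) to show that Backjump really is enabled when Drop is not. The one unsupported claim is that $\bar M$ is consistent in every search state, which the rules as stated do not guarantee (for instance, Backjump sets $L$ to $\mathsf{true}$ without a feasibility check). You do not need that invariant, however: an inconsistent $\bar M$ stays inconsistent under every extension because evaluation is monotone in $\mathsf{undef}$ arguments, so $\mathsf{infeasible}(M,\mathcal{C})$ holds and Conflict applies in that case as well.
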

\begin{proof}
	To show that the derivation cannot be stuck in a state outside \ssat and \susat, we show that, given an arbitrary state $S \notin\{\ssat,\susat\}$, we can apply a rule to reach a different state.
	We distinguish two cases on $S$:
	\begin{enumerate}
		\item $S = \state{M}$, i.e.\ $S$ is not in conflict. If there exists a term $t$ such that $\feasible(M,t)\neq\emptyset$ we apply rule \nameref{rule:Decide} to reach state $S'$. If all terms are assigned and $\satisfiable(M,\mathcal{C})$ holds, then we use rule \nameref{rule:Sat} to reach state $S' = \ssat$.
		Otherwise, if there are only unassigned terms with $\feasible(M,t)=\emptyset$, then clearly $\infeasible(M,\mathcal{C})$ and we use rule \nameref{rule:Conflict} to reach a conflict state $S'$.
		In any case we have $S\neq S'$.
		
		\item $S = \state{M}\vdash C$, i.e.\ $S$ is in conflict with clause $C$. Then we can apply rules \nameref{rule:Resolve}, \nameref{rule:Consume}, and \nameref{rule:Drop} exhaustively until we either reach a state $S'$ such that either rule \nameref{rule:Backjump} can be used to reach a state $\state{M}$ not in conflict or rule \nameref{rule:Unsat} to reach state $\susat$. In both cases the new state is different from $S$.
	\end{enumerate}
	Since this clearly holds for $S = \initstate$, the desired property holds for any reachable state by induction.
\end{proof}

\begin{lemma}\label{lemma:clausesImplied}
	Given a set of clauses $\mathcal{C}$, in any derivation of the transition system starting from the initial state $\initstate$, all $\,\vdash C$ clauses are implied by $\mathcal{C}$.
\end{lemma}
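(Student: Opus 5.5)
The plan is an induction on the length of the derivation from $\initstate$. The invariant is: every conflict clause $C$ in a state $\state{M}\vdash C$, and every justification $\just(x)$ of a propagation $x\propsto\alpha$ on the trail, is implied by the \emph{original} clause set $\mathcal{C}_0$. Here ``implied'' means that every interpretation $I$ with $\mathcal{C}_0^I=\true$ makes some literal of the clause true. I will also carry a second invariant: every clause database $\mathcal{C}$ reached along the derivation has exactly the same models as $\mathcal{C}_0$. The two invariants are proved together, since \nameref{rule:Learn} and \nameref{rule:Learn-Justification} feed implied clauses back into $\mathcal{C}$.

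First the database invariant. \nameref{rule:Learn} adds a conflict clause $C$, and \nameref{rule:Learn-Justification} adds a justification $J$. By the first invariant, applied to the prior state, both are implied by $\mathcal{C}_0$, and hence by the current $\mathcal{C}$, which has the same models. So adding them changes no models. \nameref{rule:Forget} only removes learned clauses, i.e.\ clauses implied by $\mathcal{C}_0$. Hence the remaining set still contains $\mathcal{C}_0$, or at least still implies every clause of $\mathcal{C}_0$, and the model set is unchanged.

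Next the clause-introducing rules, \nameref{rule:Propagate} and \nameref{rule:Conflict}. They create $\just(x)=\explain(M,x=t)$ and $E=\explain(M,\bot)$ respectively. By requirement \ref{exp:valid} of \explain, any model of the search problem satisfies some literal of the produced clause. So the clause is implied by the current $\mathcal{C}$, and therefore by $\mathcal{C}_0$. The Boolean explanation function of Section~\ref{sec:plugins}, which returns a clause of $\mathcal{C}$, is a special case. \nameref{rule:Backjump} sets $\just(L)=C$, and $C$ is a conflict clause, implied by the induction hypothesis. \nameref{rule:Consume}, \nameref{rule:Drop} and \nameref{rule:Decide} leave the conflict clause unchanged and only remove trail elements or add decisions, so the invariant is preserved trivially.

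The main step is \nameref{rule:Resolve}. We have $C$ implied, and $\just(x)=(\lnot l_1\lor\dots\lor\lnot l_\ell\lor x\teq t)$ implied. We must show $R=C[x/t]\cup\{\lnot l_1,\dots,\lnot l_\ell\}$ is implied. Take any model $I$ of $\mathcal{C}_0$.
\begin{itemize}
\item If some $l_i^I=\false$, then $R$ is true under $I$.
\item Otherwise the justification forces $x^I=t^I$. Since $C$ is true under $I$, some literal $c\in C$ has $c^I=\true$. By compositionality of interpretations (a term's value depends only on its subterms' values), $c[x/t]^I=c^I=\true$, so $R$ is true.
\end{itemize}
The obstacle is exactly this substitution argument. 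It needs $x^I=t^I$ to imply that replacing $x$ by $t$ preserves truth values. That holds by structural induction on terms, because $I$ interprets every function as a total function, so congruence holds. It also covers the Boolean case, where $x\teq\top$ and $x$ is replaced by $\top$. With this, the induction closes, and every $\vdash C$ clause is implied by $\mathcal{C}$.
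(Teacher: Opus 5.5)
Your proof is correct and is essentially the paper's: induction over the derivation, with Conflict discharged by requirement (i) of the explanation function, and Resolve by the same case split on whether some $l_i^I$ is false or $x^I = t^I$. Your strengthened invariant (every trail justification is also implied, and the clause database always has exactly the models of the original $\mathcal{C}$) is a worthwhile tightening. It covers justifications created by Backjump ($\mathsf{just}(L) = C$) and Boolean explanations drawn from learned clauses, both of which the paper's proof passes over by assuming every justification comes from an $\mathsf{explain}$ call in Propagate.
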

\begin{proof}
	Let the implication of a clause $C$ by a set of clauses $\mathcal C$ be denoted as $\mathcal C \vDash C$.
	Given a state $S$ such that it is either of form $\state{M}$ or $\state{M}\vdash C$ where $\mathcal C \vDash C$.
	We show that applying an arbitrary rule $\mathcal R$ to reach state $S'$, the property holds for $S'$ as well.
	We split on $\mathcal R$:
	\begin{itemize}
		\item Rule \nameref{rule:Conflict}: Given $S = \state{M}$ to apply \nameref{rule:Conflict}, we have that $C = \explain(M,\bot)$. By the requirements of $\explain$, $\mathcal C \vDash C$ or a valid lemma, thus $S' = \state{M}\vdash C$ fulfills the property.
		
		\item Rule \nameref{rule:Resolve}: Let $S=\state{\trail{M,x\propsto \alpha}} \vdash C$ and $J = \just(x)$. Then $J$ was created using $\explain$ on a previous application of the rule \nameref{rule:Propagate}, thus $\mathcal C \vDash J$ and $J = (\lnot l_1 \lor \dots \lor \lnot l_\ell \lor x = t)$. 
		By assumption, we have $\mathcal C \vDash C$.
		We need to show that $\mathcal C \vDash C'$ for $C' = C[x/t] \cup (\lnot l_1\lor\dots\lor\lnot l_\ell)$.
		From $\mathcal C \vDash J$ we have that each model $I$ of $\mathcal C$ either $l_i^I = \false$ or $x^I = t^I$.
		In the first case $C'^I = \true$ as $\lnot l_i \in C'$, in the second case, since $x^I = t^I$ together with $C^I = \true$, $C[x/t]^I = \true$ and, thus, $C'^I = \true$.

		\item Any other rule does not generate a new clause $\,\vdash C$ clause.
	\end{itemize}
	Since the property holds for the initial state $\initstate$ trivially, the property holds for any reachable state in the derivation by induction.
\end{proof}

\begin{lemma}\label{lemma:conflictFalse}
	Given a set of clauses $\mathcal{C}$, in any derivation of the transition system starting from the initial state $\initstate$, we have that $\nu_{\bar M}(C) = \false$ for the $\,\vdash C$ clauses.
\end{lemma}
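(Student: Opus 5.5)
We prove the statement as an invariant of the transition system, in the same style as Lemma~\ref{lemma:clausesImplied}. The claim is that every reachable conflict state $\state{M}\vdash C$ satisfies $\nu_{\bar M}(C) = \false$. The initial state $\initstate$ is not a conflict state, so the property holds there vacuously. We then show that every rule producing a conflict state from some state $S$ preserves it. Search rules that stay in search mode (\nameref{rule:Decide}, \nameref{rule:Propagate}, \nameref{rule:Sat}, \nameref{rule:Learn-Justification}, \nameref{rule:Forget}) and the rules leaving conflict mode (\nameref{rule:Backjump}, \nameref{rule:Unsat}) do not produce a conflict state, so nothing needs to be checked for them. We split on the remaining rules.

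For \nameref{rule:Conflict}, $E = \explain(M,\bot)$ consists of the literals $\lnot e_i$, and the rule requires $\nu_{\bar M}(e_i) = \false$. This is requirement~\ref{exp:just} with $p=\bot$, and $\nu_{\bar M}(\bot)=\false$ holds because $\bot$ is interpreted in $\bar\emptyset\subseteq\bar M$. Hence every literal of $E$ evaluates to $\false$, so $\nu_{\bar M}(E)=\false$.

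\nameref{rule:Learn} changes only $\mathcal{C}$, so the property transfers immediately. \nameref{rule:Drop} requires $\nu_{\bar{M_n}}(C)=\false$ for the new trail $M_n$ explicitly. For \nameref{rule:Consume}, $x\notin\terms(C)$, so removing $x\propsto\alpha$ should not affect the evaluation of $C$. This is the first delicate point. Closure is monotone, but removing an assignment can remove closure entries. We have to argue that the literals of $C$ were evaluated in $\bar M$ without using $x$. I would do this by showing that the value of every term $u$ with $x\notin\terms(u)$ in $\bar M$ arises from assignments on terms in $\terms(u)$ or placed directly on $u$, none of which is $x\propsto\alpha$. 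I expect this step to need an additional standing assumption, namely that assignments on the trail are only made to terms whose evaluation does not route through $x$ in an unexpected way. Otherwise I would strengthen the invariant by tracking which trail elements each conflict literal depends on.

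The main obstacle is \nameref{rule:Resolve}. Here $R = C[x/t]\cup(\lnot l_1\lor\dots\lor\lnot l_\ell)$ is evaluated on $M$, after $x\propsto\alpha$ has been removed. The $\lnot l_i$ part follows from requirement~\ref{exp:just} of the justification, which was built on the prefix $M$: each $l_i$ is $\true$ there, so each $\lnot l_i$ is $\false$ in $\bar M$. For $C[x/t]$, we use $\nu_{\bar M}(t)=\alpha$ from the \nameref{rule:Propagate} precondition. Substituting $t$ for $x$ in each literal of $C$ then reproduces, within $\bar M$, the same evaluation that $\bar{\llbracket M,x\propsto\alpha\rrbracket}$ gave. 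I would prove this by induction on term structure: for every subterm $u$ of a literal of $C$, if $\nu_{\bar{M'}}(u)\neq\udef$ with $M'=\llbracket M,x\propsto\alpha\rrbracket$, then $\nu_{\bar M}(u[x/t])=\nu_{\bar{M'}}(u)$. The base case $u=x$ uses $\nu_{\bar M}(t)=\alpha$. Subterms not containing $x$ are handled by the Consume-style argument, and the inductive step uses the definition of closure via $\eval_f$. Every literal of $C$ is $\false$ in $\bar{M'}$ by the induction hypothesis, so every literal of $R$ is $\false$ in $\bar M$. This gives $\nu_{\bar M}(R)=\false$ and closes the induction.
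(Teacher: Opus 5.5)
Your route is the paper's: the same invariant induction and the same case split. \textsc{Conflict} is handled by requirement~\ref{exp:just} of \explain, \textsc{Drop} by its side condition, \textsc{Consume} by $x\notin\terms(C)$, and \textsc{Resolve} by substituting $t$ for $x$. You are somewhat more careful than the paper: you treat \textsc{Learn} explicitly, and you note that the $l_i$ are $\true$, not $\false$, on $\bar M$.

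Your worry about \textsc{Consume} is unfounded. The closure only ever computes a term's value from the values of its direct subterms. Hence, for any subterm-closed set $U$, the restriction of $\bar S$ to $U$ depends only on the restriction of $S$ to $U$. Take $U=\terms(C)$, which does not contain $x$, and $M'=\llbracket M,x\propsto\alpha\rrbracket$. This gives $\nu_{\bar M}(C)=\nu_{\overline{M'}}(C)$ without any extra assumption.

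The genuine gap is in \textsc{Resolve}, and the paper's one-line argument (``$\nu_{\bar M}(x)=\nu_{\bar M}(t)$, hence $\nu_{\bar M}(C)=\nu_{\bar M}(C[x/t])$'') has the same hole. You claim that $\nu_{\overline{M'}}(u)\neq\udef$ implies $\nu_{\bar M}(u[x/t])=\nu_{\overline{M'}}(u)$. This fails in the inductive step whenever a term $u\neq x$ with $x\in\terms(u)$ gets its $\overline{M'}$-value from a direct trail assignment rather than from $\eval_u$. In particular it fails whenever an uninterpreted symbol sits above $x$, since such terms get values only by direct assignment. In that situation $u[x/t]$ is a fresh term with no reason to be assigned in $\bar M$.

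Concretely, let $\mathcal{C}=\{(x>y),(x\teq 0)\}$ and run the following derivation:
\begin{enumerate*}[label=(\arabic*)]
\item decide $(x>y)\decidesto_1\false$;
\item propagate $(x\teq 0)\propsto\true$;
\item propagate $x\propsto 0$ with justification $(\lnot(x\teq 0)\lor x\teq 0)$;
\item apply \textsc{Conflict} with $E=(x>y)$, which does satisfy the invariant.
\end{enumerate*}
Since $x\in\terms(E)$, \textsc{Resolve} applies (and \textsc{Consume} does not). It yields $R=(0>y\lor\lnot(x\teq 0))$ on $M=\llbracket (x>y)\decidesto_1\false,\,(x\teq 0)\propsto\true\rrbracket$. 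As $y$ is unassigned, $\nu_{\bar M}(0>y)=\udef$, so $\nu_{\bar M}(R)=\udef$.

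So the invariant is false for the rules as written, and no more careful term induction can close the gap. Your inductive step needs every term on a path from a literal of $C$ down to $x$ to receive its $\overline{M'}$-value from $\eval$. Either add a side condition to \textsc{Resolve} along those lines, or weaken the statement.

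A minor point on the \textsc{Conflict} case: argue from requirement~\ref{exp:just} directly, whose $e_i$ are the literals of $E$. Read literally, the rule's condition $\nu_{\bar M}(e_i)=\false$ for $E=(\lnot e_1\lor\dots\lor\lnot e_\ell)$ makes the literals of $E$ true, so your ``hence'' does not follow from it.
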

\begin{proof}
	We perform the same induction as for Lemma~\ref{lemma:clausesImplied} and split on the applied rule $\mathcal R$:
	\begin{itemize}
		\item Rule \nameref{rule:Conflict}: By construction of $C = \explain(M, \bot)$ in the resulting state $S' = \state{M} \vdash C$ we have that $\nu_{\bar M}(C) = \false$.
		
		\item Rule \nameref{rule:Resolve}: Let $S=\state{\trail{M,x\propsto \alpha}} \vdash C$ and $J = \just(x)$. By the induction hypothesis we have $\nu_{\bar M}(C) = \false$.
		Since $J$ was created using $\explain$ on a previous application of the rule \nameref{rule:Propagate}, $J = (\lnot l_1 \lor \dots \lor \lnot l_\ell \lor x = t)$ and $\nu_{\bar{M}}(l_i) = \false$ for $1\leq i \leq \ell$, thus $\nu_{\bar M}(x) = \nu_{\bar M}(t)$.
		Therefore, $\nu_{\bar M}(C) = \nu_{\bar M}(C[x/t])$ and, therefore, $\nu_{\bar M}(C') = \false$.
			
		\item Rule \nameref{rule:Consume}: As the propagated term $x$ is not contained in $C$, we have $\nu_{\bar M}(C) = \false$ trivially.
		
		\item Rule \nameref{rule:Drop}: The condition holds for the updated trail trivially as the desired property is a prerequisite of the rule.
		
		\item Any other rule does not generate a state with a $\,\vdash C$ clause.
	\end{itemize}
	Since the property holds for the initial state $\initstate$ trivially, the property holds for any reachable state in the derivation by induction.
\end{proof}

\begin{lemma}[Congruence]
\label{lem:congruence}
Given a set of clauses $\mathcal{C}$, in any derivation of the transition
system starting from the initial state $\langle \llbracket\rrbracket, \mathcal{C}\rangle$,
for every reachable state with trail $M$, every uninterpreted function symbol
$f$, and every pair of $f(t_1,\dots,t_n)$,
$f(t_1',\dots,t_n')$ with
$\nu_{\bar M}\big(f(t_1,\dots,t_n)\big),
\nu_{\bar M}\big(f(t_1',\dots,t_n')\big) \neq \udef$, we have
\[
\big(\forall i.\ \nu_{\bar M}(t_i) = \nu_{\bar M}(t_i')\big)
\;\Longrightarrow\;
\nu_{\bar M}\big(f(t_1,\dots,t_n)\big) = \nu_{\bar M}\big(f(t_1',\dots,t_n')\big).
\]
\end{lemma}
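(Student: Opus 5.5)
The plan is an induction over the length of the derivation, in the same style as Lemma~\ref{lemma:clausesImplied} and Lemma~\ref{lemma:conflictFalse}. The invariant is exactly the congruence property of the statement, evaluated on $\bar M$ for the trail $M$ of the current state. For the initial state $\langle \llbracket\rrbracket, \mathcal{C}\rangle$ the trail is empty. I will first establish a structural observation: an uninterpreted function $f$ has no evaluation function $\eval_f$. Hence the closure construction never adds an assignment to a term $f(t_1,\dots,t_n)$; such a term receives a value in $\bar M$ only if it is itself assigned on $M$. For $\bar\emptyset$ no $f$-term is assigned, so the implication holds vacuously. The same observation reduces every case below to pairs of $f$-terms that occur explicitly on the trail, together with the values their arguments receive in $\bar M$.

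I then split on the rule $\mathcal R$ producing the next state. Rules that leave the trail unchanged (\nameref{rule:Conflict}, \nameref{rule:Learn}, \nameref{rule:Learn-Justification}, \nameref{rule:Forget}) preserve the invariant trivially. The rules \nameref{rule:Sat} and \nameref{rule:Unsat} reach terminal states without a trail, so they are also trivial.

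The rules \nameref{rule:Resolve}, \nameref{rule:Consume} and \nameref{rule:Drop} only remove a suffix of the trail. For these I will prove a monotonicity fact: if $M'$ is a prefix of $M$, then every assignment in $\bar{M'}$ also occurs in $\bar M$ with the same value. This follows by induction on the closure construction, using that $\eval_g$ returns a defined value only when that value is independent of the $\udef$ arguments, so defining more arguments cannot change it. Consequently a congruence violation in $\bar{M'}$ would be witnessed by the same values in $\bar M$, contradicting the hypothesis.

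The rules \nameref{rule:Decide} and \nameref{rule:Propagate} extend $M$ by $x\mapsto\alpha$ with $\alpha\in\feasible(M,x)$. Here I will use the requirement of Section~\ref{sec:pluginExpl} that every uninterpreted function is supported by a plugin that ensures congruence. I read this as saying that the consistency notion underlying $\feasible$ includes congruence: a value leading to two equal-argument $f$-terms with distinct values in the closure is not feasible. With this reading, both the direct case ($x$ is itself an $f$-term) and the indirect case ($x$ causes some $t_i$ to acquire a value in the closure) are excluded by feasibility.

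The main obstacle is \nameref{rule:Backjump}. The new literal $L$ is put on the trail directly as $L\mapsto\true$, without any $\feasible$ check, and its closure may give values to argument terms $t_i$ of previously assigned $f$-terms. My first attempt handles this using the prefix case: $\llbracket M_n\rrbracket$ satisfies the invariant, and the only new closure values are those of terms containing $L$ as a subterm. If such a term is an argument $t_i$ and creates a violation, then I would argue that $L$ occurs under an $f$-term on $M_n$, and that in that situation the decision replaced by $L$ was itself constrained by congruence through the conflict clause. This would require an auxiliary invariant: every Boolean literal that appears as an argument of an assigned $f$-term is already assigned. If that auxiliary invariant cannot be maintained, the fallback is to strengthen the side condition on $L$ in \nameref{rule:Backjump} to $\true\in\feasible(M_n,L)$. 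This mirrors \nameref{rule:Decide}, after which the case goes through exactly as in the previous paragraph.
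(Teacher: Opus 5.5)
Your induction skeleton matches the paper's: trivial base case, trivial cases for rules that leave the trail unchanged, inheritance for the suffix-removing rules, and an appeal to the plugin congruence requirement for the rules that assign. Your prefix-monotonicity argument for \textsc{Resolve}, \textsc{Consume} and \textsc{Drop} is more careful than the paper's one-line ``inherited''. It needs one more ingredient: every trail entry is made on a term that is $\mathsf{undef}$ in the closure of its prefix. This is explicit in \textsc{Decide} and \textsc{Backjump}. Without it, a later trail entry could override a value that $\bar{M'}$ obtained by closure.

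The genuine gap is \textsc{Backjump}, which you leave open. Your first attempt rests on the auxiliary invariant that every Boolean literal occurring as an argument of an assigned $f$-term is already assigned. The calculus does not maintain this invariant. The closure never assigns $f$-terms, so $\nu_{\bar M}(f(p))=\mathsf{undef}$ as long as $f(p)$ is not on the trail. Hence \textsc{Decide} may put $f(p)\rightarrowtail\beta$ on the trail while $p$ is still unassigned; plugins treat $f$-terms as variables, cf.\ Example~\ref{ex:termstructure}. Your fallback strengthens the side condition of \textsc{Backjump} to $\mathsf{true}\in\mathsf{feasible}(M_n,L)$. That changes the transition system, so it proves a different lemma from the one stated.

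What you are missing is that the paper never derives congruence from $\mathsf{feasible}$, and it could not. $\mathsf{consistent}$ is defined through $\mathsf{eval}$, and uninterpreted functions have no $\mathsf{eval}$. So your reading of $\mathsf{feasible}$ as including congruence is already an extra assumption, even for \textsc{Decide}. The paper instead invokes the requirement of Section~\ref{sec:pluginExpl} directly. Its proof lists \textsc{Decide}, \textsc{Propagate} and the decision case of \textsc{Backjump}. It dispatches all of them by noting that the assigned term is handled by its owning plugin, which is required to respect congruence. Phrase your assumption in that form: a constraint on every assignment that enters the trail, not a property of $\mathsf{feasible}$. Then \textsc{Backjump}, including its propagation case, is handled exactly like \textsc{Decide}, with no auxiliary invariant and no change to the rule.
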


\begin{proof}
By induction on the derivation. 
The base case
$\langle \llbracket\rrbracket, \mathcal{C}\rangle$ holds automatically.
Only \textsc{Decide}, \textsc{Propagate},
and the decision case of \textsc{Backjump} assign a new value to a term; in
each case the assigned term is handled by its owning plugin, which we
require to respect congruence. All other
rules only remove trail elements or leave $M$ unchanged, so the property is
inherited from the induction hypothesis. 
\end{proof}

\begin{lemma}[Model Existence]
\label{lem:model-existence}
Given a set of clauses $\mathcal{C}$ and a term assignment $S$ such that
$\satisfiable(S, \mathcal{C})$ holds, $\mathcal{C}$ is satisfiable.
\end{lemma}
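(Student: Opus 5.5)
We construct an interpretation $I$ from the closure $\bar S$ and then check that every clause of $\mathcal{C}$ is true under $I$.

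\textbf{Constructing $I$.} By the second conjunct of $\satisfiable(S,\mathcal{C})$, every term $t\in\terms(\mathcal{C})$ has $\nu_{\bar S}(t)\neq\udef$.
\begin{enumerate*}[label=(\roman*)]
\item Interpreted functions get their fixed theory semantics in $I$.
\item Each uninterpreted constant $c$ occurring in $\mathcal{C}$ gets $c^I=\nu_{\bar S}(c)$; constants not occurring in $\mathcal{C}$ get an arbitrary element of their (non-empty) domain.
\item For an uninterpreted $f$ of positive arity, $f^I$ is the partial function $(\nu_{\bar S}(t_1),\dots,\nu_{\bar S}(t_n))\mapsto \nu_{\bar S}(f(t_1,\dots,t_n))$, defined for every $f(t_1,\dots,t_n)\in\terms(\mathcal{C})$, and totalised arbitrarily elsewhere.
\end{enumerate*}
For step (iii) to be well defined, two such terms with equal argument values must have equal values. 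This is the congruence property. I would take it from Lemma~\ref{lem:congruence} when $S$ is a reachable trail. For a general $S$, I would argue it from the requirement that the plugin supporting $f$ ensures congruence, reading that requirement as part of what $\consistent(\bar S)$ means for uninterpreted symbols.

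\textbf{Agreement with $\bar S$.} Next I would prove, by structural induction on $t\in\terms(\mathcal{C})$, that $t^I=\nu_{\bar S}(t)$.
\begin{enumerate*}[label=(\roman*)]
\item Uninterpreted constants and uninterpreted applications: this holds by construction, using the induction hypothesis on the arguments.
\item Interpreted $t=f(t_1,\dots,t_m)$: by the induction hypothesis, $t^I=\eval_f(t_1^I,\dots,t_m^I)=\eval_f(\nu_{\bar S}(t_1),\dots,\nu_{\bar S}(t_m))$, where all arguments are domain elements, so $\eval_f$ returns a value, namely $f$'s actual interpretation. Because $\bar S$ is consistent and $(t\mapsto\nu_{\bar S}(t))\in\bar S$, that value equals $\nu_{\bar S}(t)$.
\end{enumerate*}

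\textbf{Clauses and the main obstacle.} Every literal $l$ of every $C\in\mathcal{C}$ now satisfies $l^I=\nu_{\bar S}(l)$. The third conjunct gives $\nu_{\bar S}(C)=\true$, so some literal of $C$ is $\true$ under $I$, hence $C^I=\true$ and $I$ is a model of $\mathcal{C}$. The main obstacle is the well-definedness of the uninterpreted functions, that is, congruence. Consistency as formally defined only constrains interpreted evaluation, so I expect to rely on the plugin congruence guarantee, or on Lemma~\ref{lem:congruence} for reachable trails, rather than on the bare definition.
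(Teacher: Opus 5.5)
Your proposal is correct and follows essentially the same route as the paper. Both define $I$ from $\bar S$, using the fixed semantics for interpreted symbols and tabulating uninterpreted symbols from $\nu_{\bar S}$ on $\terms(\mathcal{C})$ with an arbitrary extension elsewhere; both obtain well-definedness from congruence, prove $t^I=\nu_{\bar S}(t)$ by structural induction using $\consistent(\bar S)$, and conclude from $\nu_{\bar S}(C)=\true$. Your caveat about congruence is also well placed: the paper simply cites Lemma~\ref{lem:congruence}, which only covers trails of reachable states, so its argument carries the same implicit restriction you make explicit, and without it the statement fails for arbitrary $S$ because $\consistent$ places no constraint on uninterpreted symbols.
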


\begin{proof}
Define interpretation $I$ to agree with $\bar S$ on $\terms(\mathcal{C})$: for
interpreted symbols this is fixed by the evaluation function, and for
uninterpreted $f$, set $f^I(\nu_{\bar S}(t_1),\dots,\nu_{\bar S}(t_n)) =
\nu_{\bar S}(f(t_1,\dots,t_n))$ for each $f(t_1,\dots,t_n) \in
\terms(\mathcal{C})$, extending arbitrarily elsewhere. This is
well-defined by Lemma~\ref{lem:congruence}, and by $\consistent(\bar
S)$ we get $t^I = \nu_{\bar S}(t)$ for every $t \in \terms(\mathcal{C})$
by induction on term structure. Since $\satisfiable(S,
\mathcal{C})$ gives $\nu_{\bar S}(C) = \true$ for every $C \in \mathcal{C}$,
we obtain $C^I = \true$ for every $C \in \mathcal{C}$, so $I$ is a
model of $\mathcal{C}$.
\end{proof}

\begin{lemma}[Termination]\label{lemma:termination}
	Any derivation of the transition system starting from the initial state $\initstate$ terminates.
\end{lemma}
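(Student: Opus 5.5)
We will prove the lemma by the standard MCSat/CDCL well-founded-ordering argument, adapted to term assignments. The first step fixes a finite universe of relevant terms. By the finite basis requirement \ref{exp:fb} of \explain, every literal and term that can ever appear in a justification, conflict clause, resolvent, or learned clause comes from a finite set $\mathcal{B}$. Together with $\terms(\mathcal{C})$ this gives a finite set $\mathcal{T}$ of terms that can ever be assigned on a trail. The set of clauses over $\mathcal{B}$ is also finite, hence so is the set of possible clause databases.

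The difficulty is that assignments to infinite sorts range over infinitely many values. We will not count values. Instead, we map each trail $M$ to the sequence $(k_0,k_1,\dots,k_n)$, where $k_i$ is the number of assignments at decision level $i$ and every $k_i\leq|\mathcal{T}|$. Unambiguity and the side condition of \nameref{rule:Decide} (only $\udef$ terms) ensure that no term is assigned twice on a trail. These sequences have length at most $|\mathcal{T}|+1$ and are compared lexicographically, with a longer prefix counting as larger. Search rules then strictly increase the measure: \nameref{rule:Decide} and \nameref{rule:Propagate} each append an element. \nameref{rule:Backjump} replaces the decision at level $n+1$ by a propagation at level $n$, or by a Boolean decision $L$ at level $n+1$. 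In both cases the sequence increases lexicographically in the position where it first differs, since the suffix removed by the conflict rules is shorter. The measure lives in a finite set, so no infinite chain of such increases exists.

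Conflict phases must terminate, and we handle them separately. Each of \nameref{rule:Resolve}, \nameref{rule:Consume} and \nameref{rule:Drop} shortens the trail, and \nameref{rule:Learn} can fire at most once per clause, so every conflict phase is finite. The phase must then exit via \nameref{rule:Backjump} or \nameref{rule:Unsat}. For the whole cycle, from the search state before \nameref{rule:Conflict} to the search state after \nameref{rule:Backjump}, we compare the search trails rather than the intermediate ones. Here \nameref{rule:Backjump} keeps the prefix $M_n$ and changes level $n+1$ or $n$ to a configuration that is new relative to all trails sharing that prefix. The key point is that the final clause $C$ evaluates to \false under $\bar M$ (Lemma~\ref{lemma:conflictFalse}), while the new assignment makes a literal of $C$ \true or propagated. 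We must still check that the backjump trail is lexicographically greater than the trail that preceded it, and we expect this to be the main obstacle. For a decision on $L$ at level $n+1$ we cannot simply compare element counts. In that case we will use the condition $x\in\terms(L)$, $x\neq L$, to argue that the new decision is on a strictly ``larger'' term in a fixed well-founded term-depth ordering. The measure will then be refined so that each level records the rank of its decision, and the comparison is made on that refined measure.

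Finally, we deal with \nameref{rule:Learn-Justification} and \nameref{rule:Forget}. We restrict \nameref{rule:Learn-Justification} to the step directly after \nameref{rule:Propagate}, as the paper stipulates. Forget is handled as in SAT: we assume forgetting is applied only finitely often between learnings, or that learned conflict clauses are not forgotten infinitely often. Combining the finitely many databases with the well-founded trail measure lexicographically, as the pair (trail measure, database), yields termination.
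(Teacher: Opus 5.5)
Your overall architecture is the paper's. You use a finite term universe from the finite-basis requirement, a lexicographic measure on trails, and a big step from the search state before Conflict to the one after Backjump. You compare $L$ against $x$ in the semantic-decision case of Backjump, glue Learn-Justification to Propagate, and use the clause database as a secondary component for Forget. Your treatment of Learn inside a conflict phase is more careful than the paper's remark that every conflict rule removes a trail element.

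\textbf{Gap 1: the measure you set up does not work, and the repair is only announced.} With per-level counts, your claim that ``in both cases the sequence increases lexicographically'' is false for the decision case of Backjump. The new trail gives $(k_0,\dots,k_n,1)$, while the trail before the conflict gave $(k_0,\dots,k_n,k'_{n+1},\dots)$ with $k'_{n+1}\geq 1$, so nothing increases. You notice this afterwards, but ``each level records the rank of its decision'' is not yet an argument, and the way rank and count are combined matters:
\begin{itemize}
\item If level $i$ is recorded as $(k_i,r_i)$, you compare $(1,r_L)$ with $(k'_{n+1},r_x)$. This loses whenever propagations followed $x\decidesto_{n+1}\alpha$ at level $n+1$.
\item You need $(r_i,k_i)$ with the rank dominating. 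Then the propagation case of Backjump still works, since level $n$ keeps its rank and its count grows by one. The decision case works because $x\in\terms(L)$, $x\neq L$ gives $r_L>r_x$.
\end{itemize}
The paper avoids this by weighting trail elements individually: $w(x\propsto\alpha)=\Omega+1$ and $w(x\decidesto\alpha)=|\terms(x)|\leq\Omega$. Both Backjump cases are then strict increases exactly at the position where $x\decidesto_{n+1}\alpha$ stood.

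\textbf{Gap 2: your assumption on Forget is unnecessary, and it turns the unconditional lemma into a conditional one.} There is no restart rule, so Forget never changes the trail. Order the database component by decreasing $|\mathcal{C}|$, which is bounded because only finitely many clauses exist over the finite basis. Then Forget strictly increases the pair. Learn and Learn-Justification occur only inside steps that strictly increase the trail component: the conflict big step and Propagate, respectively. This is the paper's order $\ll$. The pair ranges over a finite set, so termination follows with no restriction on how often clauses are forgotten.

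The appeal to Lemma~\ref{lemma:conflictFalse}, and to backjump trails being ``new relative to all trails sharing that prefix'', can be dropped. The comparison depends only on the shared prefix $M_n$ and the element that replaces $x\decidesto_{n+1}\alpha$.
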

\newcommand{\bs}{\ensuremath{\longrightarrow_{bs}}\xspace}
\begin{proof}
	It is obvious that the search leaves the conflict mode after a finite number of rule applications towards the search mode or the state \susat. This is because each rule in conflict mode removes an element from the trail.
	We define a big-step transition relation \bs that covers a transition from a search state, applying a finite number of conflict rules to the next search state.
	
	Since $\explain$ returns new terms from a finite basis, we can assume a finite set of terms $T$ that contains all possible terms.
	In order to show progress of the search, we define a partial order $M_1 \prec M_2$ on trails.
	A trail contains two different kinds of elements: propagations ($x\propsto \alpha$) and decision ($x\decidesto\alpha$).
	We consider propagations to be heavier than any decision and define the weight of a decision by the size of the term.
	We define the size of a term $t$ by $|\terms(t)|$ and the maximum term size $\Omega = \max_{t\in T} |\terms(t)|$.
	We capture this in a weight function $w$ as follows:
	\begin{alignat*}{1}
		w(x\propsto \alpha) &= \Omega + 1\\
		w(x\decidesto\alpha) &= |\terms(x)|
	\end{alignat*}
	We define $M_1 \prec M_2$ using a lexicographical partial order based on the weights of the trail elements by
	\begin{alignat*}{2}
		\emptytrail   &\prec M             &\quad\textbf{if }&M \neq \emptytrail \\
		\trail{a,M_1} &\prec \trail{b,M_2} &\quad\textbf{if }&w(a) < w(b) \textbf{ or } (w(a) = w(b) \land M_1 \prec M_2)
	\end{alignat*}
	Clearly, $\emptytrail$ is the minimal element and for all trails $M$ and $N\neq \emptytrail$ we have $M\prec \trail{M,N}$ and since $|T|$ is finite, there is a maximal element.
	Any rule that adds a new element to the trail is creating a bigger trail with regard to $\prec$. This already covers most rules.
	Now let us consider the big-step \bs transition from a state $\sstate{M_1}{\mathcal{C}_1}$ to $\sstate{M_2}{\mathcal{C}_2}$.
	Since the only rule to exit the conflict state is \nameref{rule:Backjump}, any big step \bs will end in exactly one application of that rule.
	Note that no rule in conflict analysis adds an element to the trail, with the sole exception of \nameref{rule:Backjump} that replaces one trail element. Thus, $M_1$ and $M_2$ share a common prefix followed by the one trail element $(x\decidesto\alpha) \in M_1$ that was replaced by $(x'\decidesto\alpha')\in M_2$.
	There are two cases in the application of \nameref{rule:Backjump}. In the first case the decision $x\decidesto\alpha$ is replaced with a propagation $x'\propsto\alpha'$, thus $M_1\prec M_2$ as $w(x\decidesto\alpha) < w(x'\propsto\alpha')$.
	In the second case, since there are multiple literals of $C$ which are evaluated to $\udef$ by unassigning $x$, there is at least one such literal $x \neq x'$ such that $x\in\terms(x')$ and thus $w(x\decidesto \alpha) < w(x'\decidesto\alpha')$. Therefore, we have that $M_1\prec M_2$.
	
	Regarding clause learning and forgetting:
	Observe that any application of the rule \nameref{rule:Learn} is part of one big step \bs application.
	Any application of \nameref{rule:Learn-Justification} can only happen immediately after an application of \nameref{rule:Propagate}, thus any trail after the application of the two rules (as a big-step) will be bigger than before as the propagation has been added.
	To justify the \nameref{rule:Forget} rule we define another partial order $\sstate{M_1}{\mathcal{C}_1}\ll\sstate{M_2}{\mathcal{C}_2}$ by
	\[ M_1\prec M_2 \textrm{ or } (M_1 = M_2 \textrm{ and } |\mathcal{C}_1| > |\mathcal{C}_2|) \]
	With this definition we have that removing a clause generates a bigger state with regard to $\ll$. A maximal element $\sstate{M_\textrm{max}}{\mathcal C_0}$ of $\ll$ is a maximal element of $\prec$ and $C_0$, the initial set of clauses.
	Since all rules are producing bigger states and we cannot increase forever, the termination of the system follow.	
\end{proof}

\noindent
With the above lemmas, we can finally proof Theorem~\ref{thm:term}.

\begin{proof}[Proof of Theorem~\ref{thm:term}]
	Assume a set of clauses $\mathcal{C}$ and an explanation function $\explain$.
	Starting from the initial state $\initstate$, by Lemma~\ref{lemma:sinkStates} and Lemma~\ref{lemma:termination} we have that all derivations terminate in either $\ssat$ or $\susat$.
	In case the derivation terminates in \ssat, the last applied rule was
	\nameref{rule:Sat}, i.e., $\satisfiable(M,\mathcal{C})$ holds; by
	Lemma~\ref{lem:model-existence}, $\mathcal{C}$ is satisfiable.
	Thus we have that the original problem is indeed satisfied.
	
	Otherwise, if the derivation terminates in \susat, then rule \nameref{rule:Unsat} was applied, and the previous state was $\state{M_0}\vdash C$ where $M_0$ contains propagations only. By Lemma~\ref{lemma:conflictFalse}, $\nu_{\bar{M}}(C) = \false$.
	Resolving $C$ with all remaining propagations on $M_0$ to $C'$, we have $\nu_{\bar\emptyset}(C') = \false$. Then by Lemma~\ref{lemma:clausesImplied}, $\mathcal C$ implies $C$. Thus, there is no interpretation that satisfies $\mathcal C$ and the search problem is unsatisfiable.
\end{proof}

%% file: example.tex
This section provides further examples of different theories including theory combination using the conversion functions between different sorts.

We denote the sort of $m$-bit vectors as $\mathcal{S}_{BV}^{[m]}$, with the corresponding domain $\dom(\mathcal{S}_{BV}^{[m]}) = \{0,1\}^m,$ i.e. the set of all bit strings of length $m$.
Sorts for different sizes $m$ are disjoint, but they share a common set of function symbols:
\begin{enumerate*}[label=(\roman*)]
	\item Constants for each bit-vector of fixed width, e.g., \texttt{\#b1011} for $m=4$,
	\item Bitwise operators such as $\mathsf{and}$, $\bvor$, $\mathsf{xor}$, and $\mathsf{not}$,
	\item arithmetic operators, %
	\item as well as comparison operators, as usual, and
	\item conversion operators such as $\mathsf{bv2int}$, $\mathsf{bv2nat}$, and $\mathsf{nat2bv}$
\end{enumerate*}.

For a prime power $q = p^n$ with $p$ prime and $n \geq 1$, we denote the finite field with $q$ elements as $\mathbb{F}_q$.
For $n=1$, $\mathbb{F}_q$ corresponds to the integers modulo $p$; for $n > 1$, the field is constructed as a quotient ring of polynomials over $\mathbb{F}_p$ modulo an irreducible polynomial of degree $n$.
To indicate that a constant $x \in \mathbb{F}_p$ we write $x_p$.

\paragraph{Additional Example 1}
Assume we have the following instance $\mathcal{C}_\alpha = \{C_1, C_2, C_3, C_4\}$:
\begin{align*}
	C_1 &= (g(x)\teq x) \\
	C_2 &= (x - 50 \teq 3) \\
	C_3 &= (z >_u \texttt{\#b10} \lorS f(g(g(x))) \teq 16) \\
	C_4 &= (g(f(53)) \tneq g(16) \lorS y_7^3\teq 2_7)
\end{align*}
Here, $x \in \mathbb{R}$, $y$ from $\mathbb{F}_7$, and $z$ from $\mathcal{S}_{BV}^2$ are uninterpreted constants. 
Furthermore, $f: \mathbb{R} \rightarrow\mathbb{R}$ and $g: \mathbb{R} \rightarrow\mathbb{R}$ are uninterpreted functions.
In our search, we obtain the following search states by applying the rules of the calculus:
\begin{tabbing}
	\hspace*{1em} \= \hspace*{1.5em} \= \hspace*{1em} \=  \hspace*{1em} \=  \
	\hspace*{1em} \=  \hspace*{1em} \=  \hspace*{1em} \=  \hspace*{18em} \= \
	\hspace*{5em} \kill
	\> $1$. \>$\langle\llbracket \rrbracket,\mathcal{C}_\alpha\rangle$ \\
	\>\> $\downarrow$ Propagate with $\just(g(x)\teq x) = (g(x)\teq x)$\\
	\> $2$. \>$\langle\llbracket (g(x)\teq x)^* \rrbracket,\mathcal{C}_\alpha\rangle$\\
	\>\> $\downarrow$ Propagate with $\just(x - 50 \teq 3) = (x - 50 \teq 3)$\\
	\> $3$. \>$\langle\llbracket (g(x)\teq x)^*, (x - 50 \teq 3)^* \rrbracket,\mathcal{C}_\alpha\rangle$\\
	\>\> $\downarrow$ Propagate with $\just(x \propsto 53) = (x - 50 \teq 3)$\\
	\> $4$. \>$\langle\llbracket (g(x)\teq x)^*, (x - 50 \teq 3)^*, x \propsto 53 \rrbracket,\mathcal{C}_\alpha\rangle$\\
	\>\> $\downarrow$ Decide\\
	\> $5$. \>$\langle\llbracket (g(x)\teq x)^*, (x - 50 \teq 3)^*, x \propsto 53, z\decidesto_1 \texttt{\#b01} \rrbracket,\mathcal{C}_\alpha\rangle$\\
	\>\> We introduce the trail $M_1 = \llbracket (g(x)\teq x)^*, (x - 50 \teq 3)^*, x \propsto 53, z\decidesto_1 \texttt{\#b01} \rrbracket$.\\
	\>\> $\downarrow$ Propagate with $\just(f(g(g(x))) \propsto 16) = (z >_u \texttt{\#b10} \lorS f(g(g(x))) \teq 16)$\\
	\> $6$. \>$\langle\llbracket M_1, f(g(g(x))) \propsto 16 \rrbracket,\mathcal{C}_\alpha\rangle$\\
	\>\> $\downarrow$ Propagate with $\just(g(f(53))\tneq g(16) \propsto \false) =$\\
	\>\> $(\lnot(g(x) \teq x) \lorS \lnot(x \teq 53) \lorS \lnot(f(g(g(x)))\teq16) \lorS \lnot(g(f(53))\tneq g(16)))$\\
	\> $7$. \>$\langle\llbracket M_1, f(g(g(x))) \propsto 16, \lnot(g(f(53))\tneq g(16))^* \rrbracket,\mathcal{C}_\alpha\rangle$\\
	\>\> $\downarrow$ Propagate with $\just(y_7^3 \teq 2_7 \propsto \true) = (g(f(53)) \tneq g(16) \lorS y_7^3\teq 2_7)$\\
	\> $8$. \>$\langle\llbracket M_1, f(g(g(x))) \propsto 16, \lnot(g(f(53))\tneq g(16))^*, (y_7^3 \teq 2_7)^* \rrbracket,\mathcal{C}_\alpha\rangle$\\
	\>\> $\downarrow$ Conflict from $\mathcal{T}_{FF}$\\
	\> $9$. \>$\langle\llbracket M_1, f(g(g(x))) \propsto 16, \lnot(g(f(53))\tneq g(16))^*, (y_7^3 \teq 2_7)^* \rrbracket,\mathcal{C}_\alpha\rangle \vdash (y_7^3 \tneq 2_7)$\\
	\>\> $\downarrow$ Resolve with $\just(y_7^3 \teq 2_7 \propsto \true)$\\
	\> $10$. \>$\langle\llbracket M_1, f(g(g(x))) \propsto 16, \lnot(g(f(53))\tneq g(16))^* \rrbracket,\mathcal{C}_\alpha\rangle \vdash (g(f(53))\tneq g(16))$\\
	\>\> $\downarrow$ Resolve with $\just(g(f(53))\tneq g(16) \propsto \false)$\\
	\> $11$. \>$\langle\llbracket M_1, f(g(g(x))) \propsto 16 \rrbracket,\mathcal{C}_\alpha\rangle \vdash (\lnot(g(x) \teq x) \lorS \lnot(x \teq 53) \lorS \lnot(f(g(g(x)))\teq16))$\\
	\>\> $\downarrow$ Resolve with $\just(f(g(g(x))) \propsto 16)$\\
	\> $12$. \>$\langle\llbracket (g(x)\teq x)^*, (x - 50 \teq 3)^*, x \propsto 53, z\decidesto_1 \texttt{\#b01} \rrbracket,\mathcal{C}_\alpha\rangle \vdash $\\
	\>\>\>$(\lnot(g(x) \teq x) \lorS \lnot(x \teq 53) \lorS z >_u \texttt{\#b10})$\\
	\>\> Let $\mathcal{C}'_\alpha = \mathcal{C}_\alpha \cup \{(\lnot(g(x) \teq x) \lorS \lnot(x \teq 53) \lorS z >_u \texttt{\#b10})\}$.\\
	\>\> $\downarrow$ Backjump\\
	\> $13$. \>$\langle\llbracket (g(x)\teq x)^*, (x - 50 \teq 3)^*, x \propsto 53, (z >_u \texttt{\#b10})^* \rrbracket,\mathcal{C}'_\alpha\rangle$\\
	\>\> $\downarrow$ Propagate with $\just(z \propsto \texttt{\#b11}) = (\lnot(z >_u \texttt{\#b10}) \lorS z \teq \texttt{\#b11})$\\
	\> $14$. \>$\langle\llbracket (g(x)\teq x)^*, (x - 50 \teq 3)^*, x \propsto 53, (z >_u \texttt{\#b10})^*, z \propsto \texttt{\#b11} \rrbracket,\mathcal{C}'_\alpha\rangle$\\
	\>\> Let $N_0 = \llbracket (g(x)\teq x)^*, (x - 50 \teq 3)^*, x \propsto 53, (z >_u \texttt{\#b10})^*, z \propsto \texttt{\#b11} \rrbracket$.\\
	\>\> $\downarrow$ Propagate with $\just(g(53) \propsto 53) = (\lnot(x \teq 53) \lorS \lnot(g(x) \teq x) \lorS g(53) \teq 53)$\\
	\> $15$. \>$\langle\llbracket N_0, g(53) \propsto 53 \rrbracket,\mathcal{C}'_\alpha\rangle$\\
	\>\> $\downarrow$ Decide\\
	\> $16$. \>$\langle\llbracket N_0, g(53) \propsto 53, g(16) \decidesto_1 0 \rrbracket,\mathcal{C}'_\alpha\rangle$\\
	\>\> $\downarrow$ Decide\\
	\> $17$. \>$\langle\llbracket N_0, g(53) \propsto 53, g(16) \decidesto_1 0, f(53) \decidesto_2 0 \rrbracket,\mathcal{C}'_\alpha\rangle$\\
	\>\> $\downarrow$ Decide\\
	\> $18$. \>$\langle\llbracket N_0, g(53) \propsto 53, g(16) \decidesto_1 0, f(53) \decidesto_2 0, g(0) \decidesto_3 53 \rrbracket,\mathcal{C}'_\alpha\rangle$\\
	\>\> $\downarrow$ Decide\\
	\> $19$. \>$\langle\llbracket N_0, g(53) \propsto 53, g(16) \decidesto_1 0, f(53) \decidesto_2 0, g(0) \decidesto_3 53, y_7 \decidesto_4 0_7 \rrbracket,\mathcal{C}'_\alpha\rangle$\\
	\>\> $\downarrow$ Sat\\
	\> $20$. \>$\langle \ssat, \llbracket N_0, g(53) \propsto 53, g(16) \decidesto_1 0, f(53) \decidesto_2 0, g(0) \decidesto_3 53, y_7 \decidesto_4 0_7 \rrbracket\rangle$
\end{tabbing}
In this example, we can see that the decision $z\decidesto_1 \texttt{\#b01}$ applied in line $4$ caused the conflict. 
The conflict was detected in line $8$ after three propagations. 
The propagation in line $6$ from the theory of uninterpreted functions uses that $g(x) \teq x \landS x \teq 53 \landS f(g(g(x)))\teq16$ implies $g(f(53))\teq g(16)$, which is equivalent to $\lnot(g(f(53))\tneq g(16))$.
The subsequent resolution steps (lines $9-12$) systematically eliminate the assignments that led to the conflict, ultimately learning a new clause that forces a different value for $z$.
This learned constraint guides the search towards a valid model, which is successfully constructed through a series of decisions in the final steps (lines $16-19$).

\paragraph{Additional Example 2}
Assume we have the following instance $\mathcal{C}_\beta = \{C_1, C_2, C_3\}$:
\begin{align*}
	C_1 &= (z_{53} - z_{53}^2 \teq 51_{53}) \\
	C_2 &= (\mathsf{nat2bv}(\mathsf{bv2nat}(x))\teq y) \\
	C_3 &= (\lnot(f(x) \teq f(y)))
\end{align*}
Here, $z$ from $\mathbb{F}_{53}$ and $x, y$ from $\mathcal{S}_{BV}^8$ are uninterpreted constants. 
Furthermore, \mbox{$f: \mathcal{S}_{BV}^8 \rightarrow\mathbb{Z}$} is an uninterpreted function of arity $1$.
By applying the transition rules of our calculus, we obtain the following trace of search states:
\begin{tabbing}
	\hspace*{1em} \= \hspace*{1.5em} \= \hspace*{1em} \=  \hspace*{1em} \=  \
	\hspace*{1em} \=  \hspace*{1em} \=  \hspace*{1em} \=  \hspace*{18em} \= \
	\hspace*{5em} \kill
	\> $1$. \>$\langle\llbracket \rrbracket,\mathcal{C}_\beta\rangle$ \\
	\>\> $\downarrow$ Propagate with $\just(z_{53} - z_{53}^2 \teq 51_{53}) = (z_{53} - z_{53}^2 \teq 51_{53})$\\
	\> $2$. \>$\langle\llbracket (z_{53} - z_{53}^2 \teq 51_{53})^* \rrbracket,\mathcal{C}_\beta\rangle$\\
	\>\> $\downarrow$ Propagate with $\just(\mathsf{nat2bv}(\mathsf{bv2nat}(x))\teq y) = (\mathsf{nat2bv}(\mathsf{bv2nat}(x))\teq y)$\\
	\> $3$. \>$\langle\llbracket (z_{53} - z_{53}^2 \teq 51_{53})^*, (\mathsf{nat2bv}(\mathsf{bv2nat}(x))\teq y)^* \rrbracket,\mathcal{C}_\beta\rangle$\\
	\>\> $\downarrow$ Propagate with $\just(f(x) \teq f(y)) = (\lnot(f(x) \teq f(y)))$\\
	\> $4$. \>$\langle\llbracket (z_{53} - z_{53}^2 \teq 51_{53})^*, (\mathsf{nat2bv}(\mathsf{bv2nat}(x))\teq y)^*, \lnot(f(x) \teq f(y))^* \rrbracket,\mathcal{C}_\beta\rangle$\\
	\>\> $\downarrow$ Decide\\
	\> $5$. \>$\langle\llbracket (z_{53} - z_{53}^2 \teq 51_{53})^*, (\mathsf{nat2bv}(\mathsf{bv2nat}(x))\teq y)^*, \lnot(f(x) \teq f(y))^*, z_{53} \decidesto_1 2_{53} \rrbracket,\mathcal{C}_\beta\rangle$\\
	\>\> Let $M_1 = \llbracket (z_{53} - z_{53}^2 \teq 51_{53})^*, (\mathsf{nat2bv}(\mathsf{bv2nat}(x))\teq y)^*, \lnot(f(x) \teq f(y))^*, z_{53} \decidesto_1 2_{53} \rrbracket$.\\
	\>\> $\downarrow$ Decide\\
	\> $6$. \>$\langle\llbracket M_1,x\decidesto_2 \texttt{\#b01100010} \rrbracket,\mathcal{C}_\beta\rangle$ \\
	\>\> $\downarrow$ Propagate with $\just(y) = (\mathsf{nat2bv}(\mathsf{bv2nat}(x))\teq y)$\\
	\> $7$. \>$\langle\llbracket M_1,x\decidesto_2 \texttt{\#b01100010}, y \propsto \texttt{\#b01100010} \rrbracket,\mathcal{C}_\beta\rangle$ \\
	\>\> $\downarrow$ Conflict from $\mathcal{T}_{UF}$\\ 
	\> $8$. \>$\langle\llbracket M_1,x\decidesto_2 \texttt{\#b01100010}, y \propsto \texttt{\#b01100010} \rrbracket,\mathcal{C}_\beta\rangle \vdash (\lnot (x\teq y) )$ \\
	\>\> $\downarrow$ Resolve with $\just(y) = (\mathsf{nat2bv}(\mathsf{bv2nat}(x))\teq y)$\\
	\> $9$. \>$\langle\llbracket M_1,x\decidesto_2 \texttt{\#b01100010} \rrbracket,\mathcal{C}_\beta\rangle \vdash (\lnot (x\teq \mathsf{nat2bv}(\mathsf{bv2nat}(x))) )$ \\
	\>\> Let $\mathcal{C}'_\beta = \mathcal{C}_\beta \cup \{(\lnot (x\teq \mathsf{nat2bv}(\mathsf{bv2nat}(x))))\}$.\\
	\>\> $\downarrow$ Backjump\\
	\> $10$. \>$\langle\llbracket M_1,\lnot (x\teq \mathsf{nat2bv}(\mathsf{bv2nat}(x)))^* \rrbracket,\mathcal{C}'_\beta\rangle$ \\
	\>\> $\downarrow$ Conflict from $\mathcal{T}_{BV}$\\ 
	\> $11$. \>$\langle\llbracket M_1,\lnot(x\teq \mathsf{nat2bv}(\mathsf{bv2nat}(x)))^* \rrbracket,\mathcal{C}'_\beta\rangle \vdash (x\teq \mathsf{nat2bv}(\mathsf{bv2nat}(x)))$\\
	\>\> $\downarrow$ Resolve with $\just(x\teq \mathsf{nat2bv}(\mathsf{bv2nat}(x))) = (x\teq \mathsf{nat2bv}(\mathsf{bv2nat}(x)))$\\
	\> $12$. \>$\langle\llbracket (z_{53} - z_{53}^2 \teq 51_{53})^*, (\mathsf{nat2bv}(\mathsf{bv2nat}(x))\teq y)^*, \lnot(f(x) \teq f(y))^*, z_{53} \decidesto_1 2_{53} \rrbracket,\mathcal{C}'_\beta\rangle \vdash (\bot)$\\
	\>\> $\downarrow$ Drop\\
	\> $13$. \>$\langle\llbracket (z_{53} - z_{53}^2 \teq 51_{53})^*, (\mathsf{nat2bv}(\mathsf{bv2nat}(x))\teq y)^*, \lnot(f(x) \teq f(y))^*\rrbracket,\mathcal{C}'_\beta\rangle \vdash (\bot)$\\
	\>\> $\downarrow$ Unsat\\
	\> $14.$ \>$\susat$
\end{tabbing}
The instance $\mathcal{C}_\beta$ is $\susat$ because of the clauses $C_2 = (\mathsf{nat2bv}(\mathsf{bv2nat}(x))\teq y)$ and $C_3 = (\lnot(f(x) \teq f(y)))$.
The core contradiction arises because $C_2$ equates $y$ to a function of $x$ that is semantically equivalent to $x$ itself, effectively forcing $x = y$. This directly contradicts $C_3$, which requires $f(x) \neq f(y)$ and therefore $x \neq y$.
$C_2$ forces $x = y$ to hold, and $C_3$ implies $x \neq y$, which leads to the unsatisfiability.
In line $9$, the Backjump rule is applied to learn $(\lnot (x\teq \mathsf{nat2bv}(\mathsf{bv2nat}(x))))$ and propagate \mbox{$(x\teq \mathsf{nat2bv}(\mathsf{bv2nat}(x))) \propsto \false$} with the justification $\just(x\teq \mathsf{nat2bv}(\mathsf{bv2nat}(x))) = (\lnot (x\teq \mathsf{nat2bv}(\mathsf{bv2nat}(x))))$.
The learned clause $(\lnot (x\teq \mathsf{nat2bv}(\mathsf{bv2nat}(x))))$ captures the fundamental inconsistency that the term $\mathsf{nat2bv}(\mathsf{bv2nat}(x))$ cannot be equal to $x$ under the constraints imposed by the other clauses.
This propagated literal immediately leads to the conflict in line $10$ because there is no $\feasible$ value for $x$.

\paragraph{Additional Example 3}
Let $\mathcal{C}_\gamma$ be the following instance:
\begin{align*}
	C_1 &= (2x \leq 37) \\
	C_2 &= (\mathsf{bv2nat}(y) < x - 3) \\
	C_3 &= (\mathsf{nat2bv}(x)\teq \bvor(y,\texttt{\#b0011}) \lorS 20 - x < 2)
\end{align*}
Here, $x \in \mathbb{Z}$ and $y$ from $\mathcal{S}_{BV}^4$ are uninterpreted constants. 
By applying the rules of our calculus, we obtain the following trace of search states:
\begin{tabbing}
	\hspace*{1em} \= \hspace*{1.5em} \= \hspace*{1em} \=  \hspace*{1em} \=  \
	\hspace*{1em} \=  \hspace*{1em} \=  \hspace*{1em} \=  \hspace*{18em} \= \
	\hspace*{5em} \kill
	\> $1$. \>$\langle\llbracket \rrbracket,\mathcal{C}_\gamma\rangle$ \\
	\>\> $\downarrow$ Propagate with $\just(2x \leq 37) = (2x \leq 37)$\\
	\> $2$. \>$\langle\llbracket (2x \leq 37)^* \rrbracket,\mathcal{C}_\gamma\rangle$\\
	\>\> $\downarrow$ Propagate with $\just(\mathsf{bv2nat}(y) < x - 3) = (\mathsf{bv2nat}(y) < x - 3)$\\
	\> $3$. \>$\langle\llbracket  (2x \leq 37)^*, (\mathsf{bv2nat}(y) < x - 3)^* \rrbracket,\mathcal{C}_\gamma\rangle$\\
	\>\> $\downarrow$ Decide\\
	\> $4$. \>$\langle\llbracket (2x \leq 37)^*, (\mathsf{bv2nat}(y) < x - 3)^*, (20 - x < 2)_1 \rrbracket,\mathcal{C}_\gamma\rangle$\\
	\>\> $\downarrow$ Conflict\\
	\> $5$. \>$\langle\llbracket (2x \leq 37)^*, (\mathsf{bv2nat}(y) < x - 3)^*, (20 - x < 2)_1 \rrbracket,\mathcal{C}_\gamma\rangle \vdash $\\
	\>\>\>\>$(\lnot(2x \leq 37) \lorS \lnot(20 - x < 2))$\\
	\>\> Let $\mathcal{C}'_\gamma = \mathcal{C}_\gamma \cup \{(\lnot(2x \leq 37) \lorS \lnot(20 - x < 2))\}$.\\
	\>\> $\downarrow$ Backjump\\
	\> $6$. \>$\langle\llbracket (2x \leq 37)^*, (\mathsf{bv2nat}(y) < x - 3)^*, \lnot(20 - x < 2)^* \rrbracket,\mathcal{C}'_\gamma\rangle$\\
	\>\> Let $M = \llbracket (2x \leq 37)^*, (\mathsf{bv2nat}(y) < x - 3)^*, \lnot(20 - x < 2)^* \rrbracket$.\\
	\>\> $\downarrow$ Propagate with $\just(\mathsf{nat2bv}(x)\teq \bvor(y,\texttt{\#b0011})) =$\\
	\>\>\>\>$ (\mathsf{nat2bv}(x)\teq \bvor(y,\texttt{\#b0011}) \lorS 20 - x < 2)$\\
	\> $7$. \>$\langle\llbracket M, (\mathsf{nat2bv}(x)\teq \bvor(y,\texttt{\#b0011}))^* \rrbracket,\mathcal{C}'_\gamma\rangle$\\
	\>\> $\downarrow$ Decide\\
	\> $8$. \>$\langle\llbracket M, (\mathsf{nat2bv}(x)\teq \bvor(y,\texttt{\#b0011}))^*, x \decidesto_1 -1 \rrbracket,\mathcal{C}'_\gamma\rangle$\\
	\>\> $\downarrow$ Conflict\\
	\> $9$. \>$\langle\llbracket M, (\mathsf{nat2bv}(x)\teq \bvor(y,\texttt{\#b0011}))^*, x \decidesto_1 -1 \rrbracket,\mathcal{C}'_\gamma\rangle\vdash (0<x-3)$\\
	\>\> Let $\mathcal{C}''_\gamma = \mathcal{C}'_\gamma \cup \{(0<x-3)\}$.\\
	\>\> $\downarrow$ Backjump\\
	\> $10$. \>$\langle\llbracket M, (\mathsf{nat2bv}(x)\teq \bvor(y,\texttt{\#b0011}))^*, (0<x-3)^* \rrbracket,\mathcal{C}''_\gamma\rangle$\\
	\>\> $\downarrow$ Decide\\
	\> $11$. \>$\langle\llbracket M, (\mathsf{nat2bv}(x)\teq \bvor(y,\texttt{\#b0011}))^*, (0<x-3)^*, x \decidesto_1 7 \rrbracket,\mathcal{C}''_\gamma\rangle$\\
	\>\> $\downarrow$ Conflict\\
	\> $12$. \>$\langle\llbracket M, (\mathsf{nat2bv}(x)\teq \bvor(y,\texttt{\#b0011}))^*, (0<x-3)^*, x \decidesto_1 7 \rrbracket,\mathcal{C}''_\gamma\rangle \vdash $\\
	\>\>\>\>$ (\lnot(2x \leq 37)\lorS\lnot(\mathsf{bv2nat}(y) < x - 3) \lorS \lnot(\mathsf{nat2bv}(x)\teq \bvor(y,\texttt{\#b0011})))$\\
	\>\> $\downarrow$ Drop\\
	\> $13$. \>$\langle\llbracket M, (\mathsf{nat2bv}(x)\teq \bvor(y,\texttt{\#b0011}))^*, (0<x-3)^* \rrbracket,\mathcal{C}''_\gamma\rangle \vdash$\\
	\>\>\>\>$(\lnot(2x \leq 37)\lorS\lnot(\mathsf{bv2nat}(y) < x - 3) \lorS \lnot(\mathsf{nat2bv}(x)\teq \bvor(y,\texttt{\#b0011})))$\\
	\>\> $\downarrow$ Unsat\\
	\> $14.$ \>$\susat$
\end{tabbing}
We have shown that the instance $\mathcal{C}_\gamma$ is $\susat$.
In this example, the final conflict in line $11$ was only detected after a different conflict was managed.
The final conflict could have been detected in line $7$ already.
For the conflict detected in line $8$, we use the conflict clause $(0<x-3)$, which is generated by an oracle that uses the fact $\mathsf{bv2nat}(y) \geq 0$.
The clause $(0<x-3)$ learned from the first conflict effectively prunes the search space by restricting $x$ to values greater than $3$.
Here the solver explores multiple possible assignments for $x$ until the final conflict is derived.
The final conflict demonstrates that even when $x$ is assigned a value like $7$, which satisfies all individual theory constraints, the specific combination of constraints from $C_1$, $C_2$, and $C_3$ is inherently unsatisfiable.
The Drop rule can be applied in line $12$ because all literals of the conflict clause $(\lnot(2x \leq 37)\lorS\lnot(\mathsf{bv2nat}(y) < x - 3) \lorS \lnot(\mathsf{nat2bv}(x)\teq \bvor(y,\texttt{\#b0011})))$ evaluate to $\false$ under the current trail.
This is due to the fact that their negations $2x \leq 37, \mathsf{bv2nat}(y) < x - 3$ and $\mathsf{nat2bv}(x)\teq \bvor(y,\texttt{\#b0011})$ are on the current trail and evaluate to $\true$.